\documentclass[sigconf]{acmart}
\AtBeginDocument{%
  }

\setcopyright{acmlicensed}
\copyrightyear{2018}
\acmYear{2018}
\acmDOI{XXXXXXX.XXXXXXX}
\acmConference[Conference acronym 'XX]{Make sure to enter the correct
  conference title from your rights confirmation email}{June 03--05,
  2018}{Woodstock, NY}
\acmISBN{978-1-4503-XXXX-X/2018/06}

\usepackage{multirow}
\usepackage{subfigure}
\usepackage{algorithm}
\usepackage{algorithmicx}
\usepackage[noend]{algpseudocode}

\newcommand{\ours}{UniDisc}

\begin{document}

\title{Unified Data Discovery across Query Modalities and User Intents}

\author{Tingting Wang}
\affiliation{%
	\institution{The University of Queensland}
	\city{St Lucia}
	\state{QLD}
	\country{Australia}
}
\email{tingting.wang@uq.edu.au}

\author{Shixun Huang}
\affiliation{%
	\institution{The University of Wollongong}
	\city{Wollongong}
	\state{NSW}
	\country{Australia}
}
\email{shixunh@uow.edu.au}

\author{Zhifeng Bao}
\affiliation{%
	\institution{The University of Queensland}
	\city{St Lucia}
	\state{QLD}
	\country{Australia}
}
\email{zhifeng.bao@uq.edu.au}

\author{J. Shane Culpepper}
\affiliation{%
	\institution{The University of Queensland}
	\city{St Lucia}
	\state{QLD}
	\country{Australia}
}
\email{s.culpepper@uq.edu.au}

\author{Shazia Sadiq}
\affiliation{%
	\institution{The University of Queensland}
	\city{St Lucia}
	\state{QLD}
	\country{Australia}
}
\email{shazia@eecs.uq.edu.au}

\author{Volkan Dedeoglu}
\affiliation{%
	\institution{CSIRO's Data61}
	\city{Pullenvale}
	\state{QLD}
	\country{Australia}
}
\email{Volkan.Dedeoglu@data61.csiro.au}

\author{Reza Arablouei}
\affiliation{%
	\institution{CSIRO's Data61}
	\city{Pullenvale}
	\state{QLD}
	\country{Australia}
}
\email{Reza.Arablouei@data61.csiro.au}



\begin{abstract}
Data discovery -- retrieving relevant tables from a data lake in response to user queries -- is a fundamental building block for downstream analytics. In practice, data discovery must support different query modalities, including natural language (NL) statements and tables.
It must also accommodate diverse user intents, ranging from open-ended enrichment to task-driven inference for applications such as table question answering and fact verification.
However, most existing methods are designed for a single query modality or a specific user intent, limiting their generalizability.

We propose \ours{}, a \underline{Uni}fied Data \underline{Disc}overy framework that supports both NL statements and tables as queries, and generalizes across diverse user intents without relying on intent-specific representations or relevance modeling. 
At its core, \ours{} learns a common cross-modal representation learning model that produces unified representations for queries of different modalities and candidate tables, enabling uniform relevance assessment across discovery scenarios.

Learning such a common model typically requires large collections of labeled (query, table) pairs to bridge the substantial semantic gap between queries of different modalities and candidate tables, which is expensive to obtain.
To this end, we introduce a cross-modal graph learning model that learns unified representations for both queries and candidate tables under limited supervision by capturing naturally occurring contextual signals in a data lake. NL statements and tables are modeled as nodes in a heterogeneous graph with multiple edge types, over which dual-view neighbor aggregation and joint optimization are applied to learn robust and context-aware representations.
These representations then form the basis for flexible relevance estimation during retrieval.

Extensive experiments on seven datasets, covering three types of NL statements -- short texts without explicit intents, clearly stated questions, and unverified factual claims -- show that \ours{} consistently outperforms strong baselines on both NL- and table-based discovery. Moreover, \ours{} improves downstream applications such as table question answering and fact verification, while maintaining strong performance with only 1\% of the training data, demonstrating high label efficiency and scalability.

\end{abstract}

\maketitle

\section{Introduction}
\label{sec:intro}

\begin{figure}[t]
\setlength{\abovecaptionskip}{0cm}
\setlength{\belowcaptionskip}{-0.5cm}
\centering
\subfigtopskip=0pt
\subfigbottomskip=0pt
\subfigcapskip=-4pt
\subfigure[Open-ended text enrichment]{\label{subfig:news} \includegraphics[width=81mm]{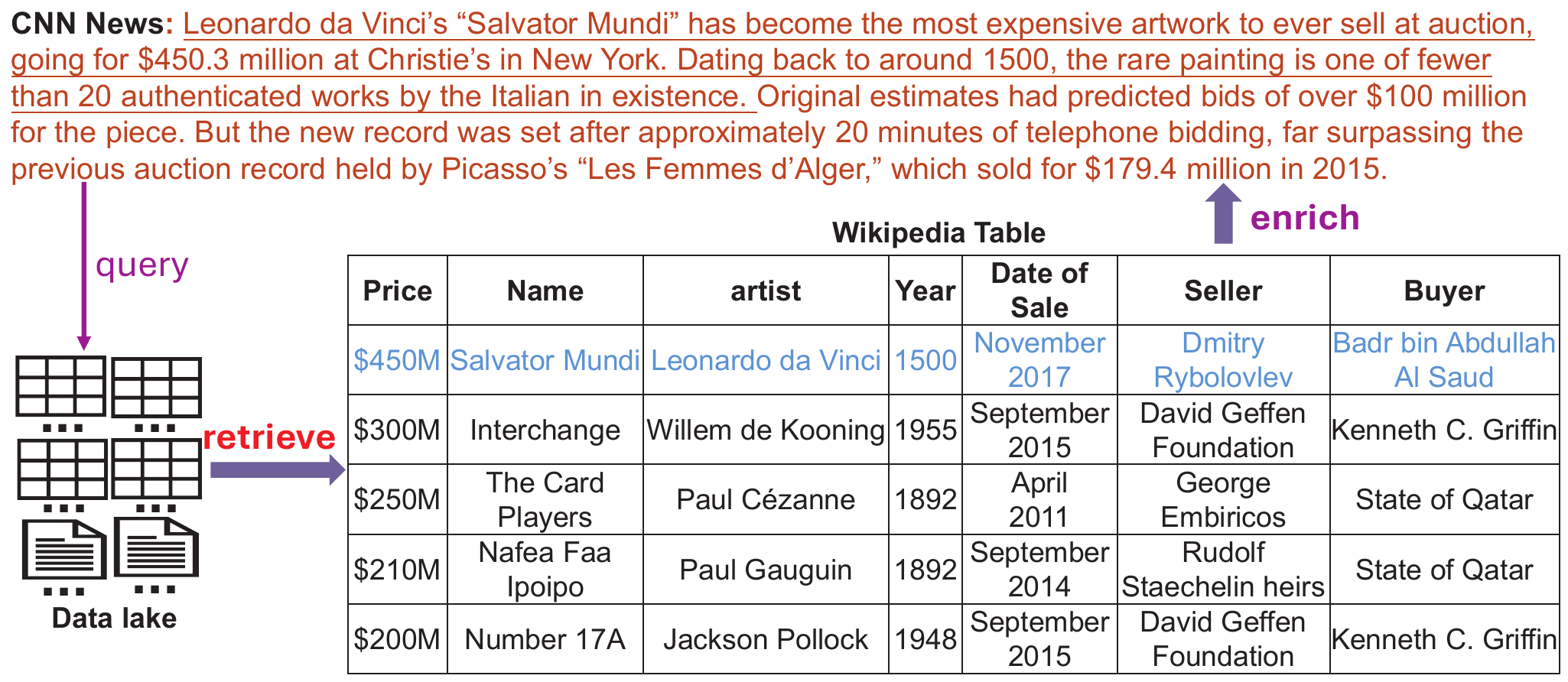}}\\
\subfigure[Open-ended table enrichment]{\label{subfig:join} \includegraphics[width=71mm]{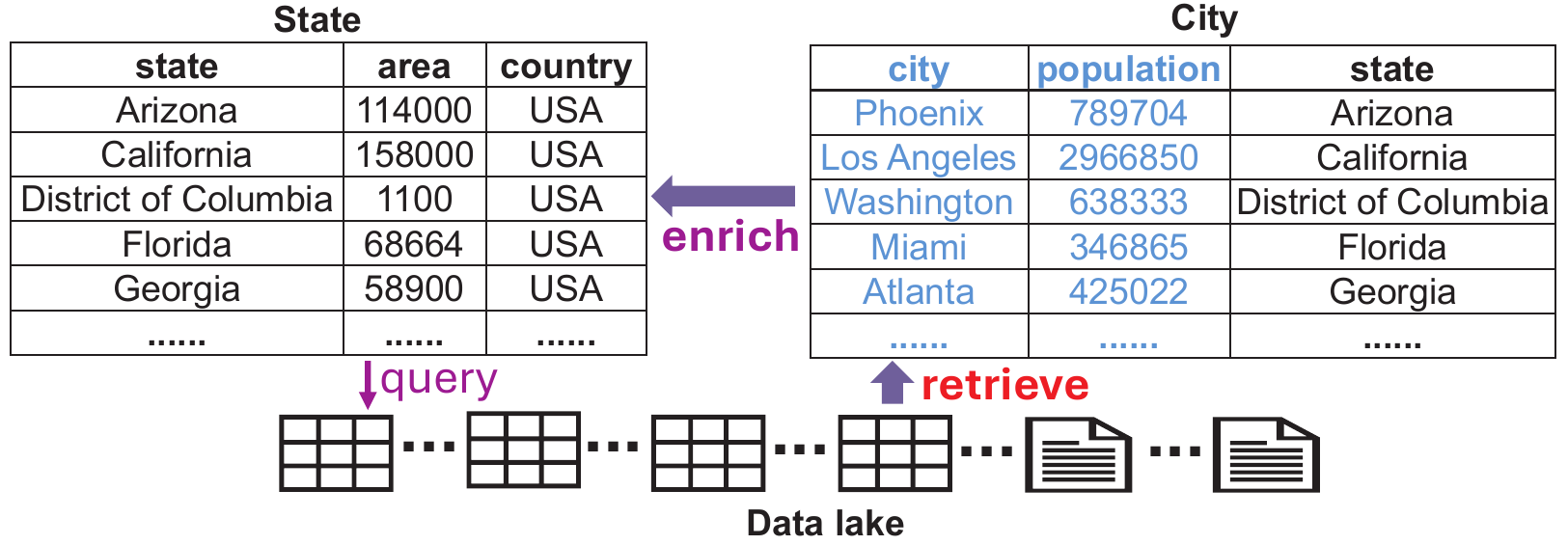}}\\
\subfigure[Table-based fact verification 
]{\label{subfig:TFV} \includegraphics[width=74mm]{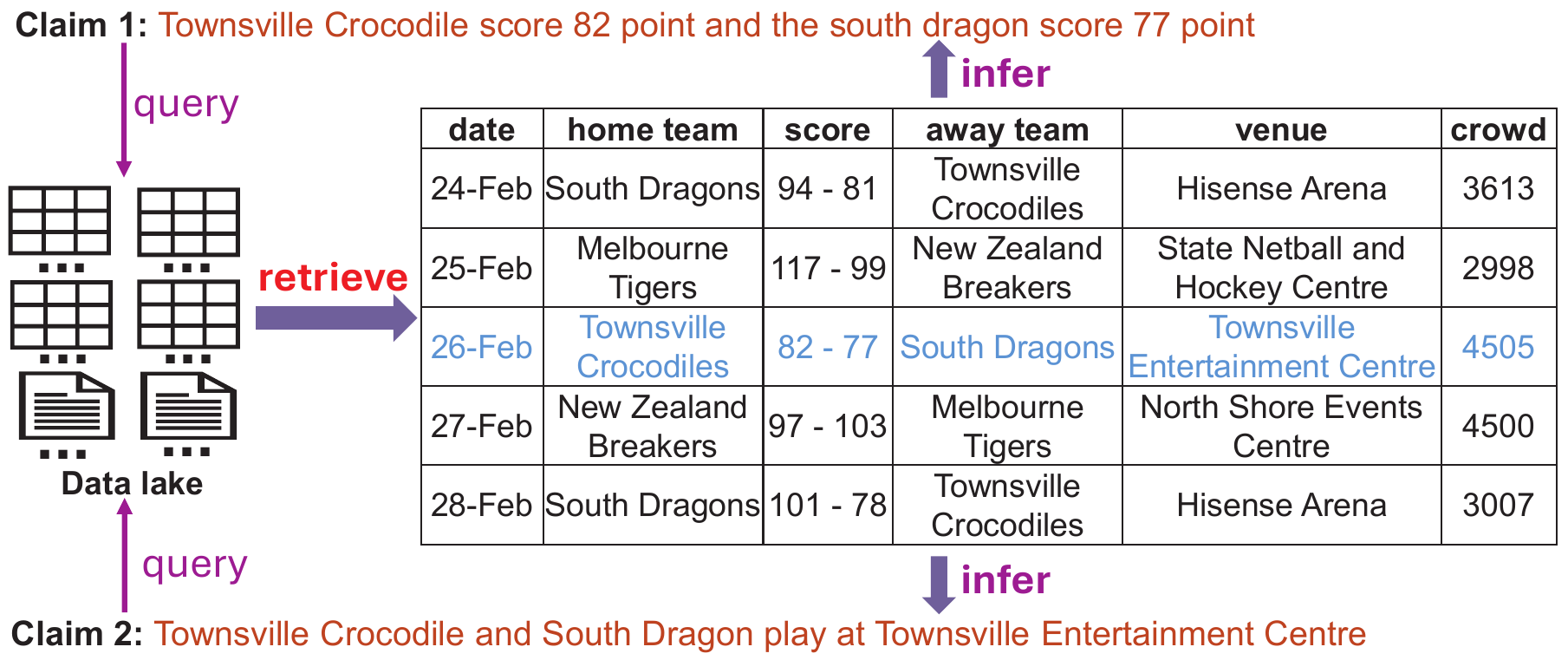}}\\
\subfigure[Table question answering 
]{\label{subfig:TQA} \includegraphics[width=81mm]{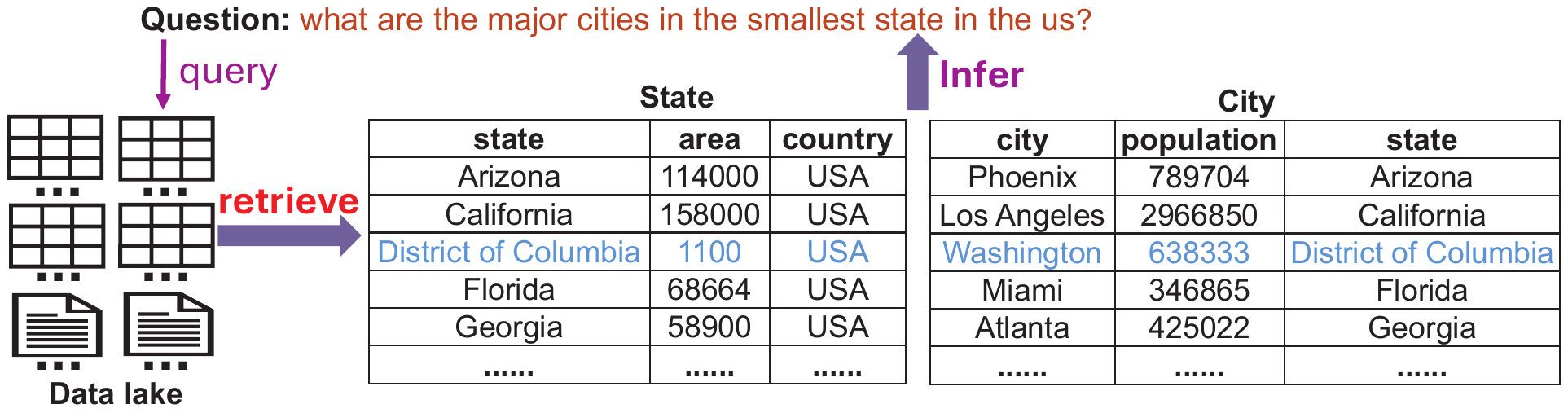}}\\
\caption{\label{fig:example} Data discovery for diverse user intents.
Retrieved tables provide the information (in blue) to fulfill user intents.
}
\end{figure}

The proliferation of data collection technologies has led to massive, often uncurated repositories known as \emph{data lakes}~\cite{NargesianZMPA19}. 
While rich in potential, data lakes pose significant challenges for locating data tailored to specific downstream tasks. 
As a result, \emph{data discovery} -- the process of identifying relevant tables from a data lake in response to user queries -- has emerged as a fundamental component of modern data science pipelines, attracting increasing attention from both academia and industry~\cite{BrickleyBN19,CasteloRSBCF21,BogatuFP020,FernandezAKYMS18}. 

Despite its fundamental role, data discovery in practice involves diverse user intents -- either explicit or implicit -- giving rise to two representative discovery scenarios, as illustrated in Fig.~\ref{fig:example}:

\noindent$\bullet$ \emph{Open-ended Enrichment.} 
In this scenario, the query does not explicitly specify a concrete information need.
Enrichment can be instantiated across different query modalities, including text enrichment~\cite{LeesBKS0021,SilvaB23} and table enrichment~\cite{Dong0NEO23,GuoMHCG25}.
For example, given a short text describing an auction event (Fig.~\ref{subfig:news}), a user may retrieve structured tables to fill in missing information, such as the buyer and seller.
Similarly, given a table containing state-level information in the U.S. (Fig.~\ref{subfig:join}), the user may augment its attributes by retrieving joinable tables (e.g., \texttt{City}).

\noindent$\bullet$ \emph{Task-driven Inference.} In this scenario, data discovery is explicitly guided by downstream tasks, with retrieved tables serving as inputs for subsequent inference.
In table-based fact verification~\cite{OpenTFV22,ChaiGZF021}, users issue factual claims and seek relevant tables as evidence to support or refute the claims.
For example, a table of sports team statistics can serve as evidence for multiple factual claims, as shown in Fig.~\ref{subfig:TFV}.
In table question answering (QA)~\cite{WangF23,balaka2025pneuma}, users issue questions, and relevant table identification is a necessary precondition for answer inference.
For example, the tables \texttt{State} and \texttt{City} can be retrieved to infer the answer to the question in Fig.~\ref{subfig:TQA}.

Although user intents and query modalities vary widely, they share a common goal: retrieving tables that are relevant to a query, where the query is a natural language (NL) statement or a table. This motivates the \emph{unified data discovery problem}: given a query in either modality, retrieve relevant tables from a data lake in an intent-agnostic manner across discovery scenarios.

However, existing methods largely fall short of this goal, as they are typically designed for specific query modalities or user intents. Most prior work supports a single query modality -- either tables~\cite{BogatuFP020,GalhotraGF23,CasteloRSBCF21} or NL statements~\cite{SilvaB23,WangF23,LeesBKS0021}. 
Within a given query modality, these methods are typically optimized for a specific user intent in a particular discovery scenario, such as open-ended enrichment (e.g., joinable table search~\cite{Dong0NEO23,GuoMHCG25,DongT0O21} or unionable table search~\cite{NargesianZPM18,KhatiwadaFSCGMR23,FanWLZM23}) or task-driven inference (e.g., table retrieval for QA~\cite{WangF23,balaka2025pneuma,wummqa}).

CMDL~\cite{EltabakhKEA23} partially addresses this limitation by learning unified representations for both NL and table queries, but still relies on intent-specific relevance modeling. Recent advances in large language models (LLMs) demonstrate strong general relevance assessment capabilities~\cite{ArabzadehC25a,RatheeMA25,TakehiVSS25}. However, applying LLMs to data discovery typically requires exhaustive query–table evaluation, resulting in prohibitive cost and latency (See our experiments in Sec.~\ref{sec:NL2table}). Consequently, extending existing methods to broader query modalities or user intents often incurs task-specific adaptations, additional supervision, or substantial computational overhead, limiting scalability and generality.

To fill this gap, we propose \ours{}, a unified data discovery framework that supports different query modalities and generalizes across diverse user intents in discovery scenarios, including open-ended enrichment and task-driven inference.
The design of \ours{} is guided by a central question: \emph{how can relevance between a query -- either an NL statement or a table -- and candidate tables be assessed without relying on intent-specific representations or relevance modeling?} 
A natural answer is to design a common cross-modal representation learning model.
Such a model embeds queries of different modalities and tables into a shared space, providing a basis for uniformly estimating relevance across discovery scenarios.
However, learning this model typically requires large collections of labeled (query, table) pairs to bridge the semantic gap between NL statements and tables.
In practice, obtaining such labeled pairs is expensive -- even synthetic generation incurs non-trivial effort~\cite{EltabakhKEA23}. 

\smallskip
\noindent\textbf{Contributions}. To address this, we make contributions as follows:

\noindent$\bullet$ We introduce a graph learning model to learn unified representations for queries of different modalities and candidate tables under \textbf{limited supervision}, which is central to \ours{}.
This model amplifies the available supervision by leveraging rich contextual signals that naturally exist in data lakes.
For example, tables with overlapping content (e.g., tables in Fig.~\ref{subfig:TQA}) are often relevant to the same query, while NL statements about related topics (e.g., claims in Fig.~\ref{subfig:TFV}) frequently reference the same table.
We encode such signals in a heterogeneous graph where NL statements and tables are nodes, and edges capture rich cross-modal and intra-modal contextual signals, including semantic or co-occurrence links (between NL statements and tables), structural or semantic similarity (between tables), and topical relatedness (between NL statements).
By propagating information over this graph, the model learns robust representations that capture relational context unavailable from isolated query-table pairs (Secs.~\ref{sec:representation}--\ref{sec:opt}).

\noindent$\bullet$ We materialize the learned representations into a vector index to enable efficient retrieval.
Built on this index, \ours{} introduces an adaptive retrieval module that operates on the learned representations to uniformly support relevance assessment across discovery scenarios, including retrieving all tables identified as relevant and returning a top-$k$ ranked list (Sec.~\ref{sec:retrieval}). 

\noindent$\bullet$ We conduct extensive experiments on seven real-world datasets against strong baselines, including CMDL, demonstrating that \ours{} achieves superior effectiveness and efficiency across diverse discovery scenarios. In particular, \ours{} outperforms baselines for NL-based discovery over short texts without explicit intents, clearly stated questions, and unverified factual claims, as well as for table-based discovery using tables as queries, achieving significant effectiveness gains and orders-of-magnitude speed-ups. Moreover, by improving retrieval quality, \ours{} yields measurable end-to-end performance gains on downstream tasks such as table-based fact verification, and remains robust under limited supervision (as little as 1\% training data) and sparse graph structures (Sec.~\ref{sec:experiment}).

\section{A Unified Data Discovery Framework}
\label{sec:overview}

In this section, we first present the unified data discovery problem setting and then introduce the design of \ours{}, a unified data discovery framework tailored to this setting.

\begin{figure*}[t]
\setlength{\abovecaptionskip}{0cm}
\setlength{\belowcaptionskip}{-0.5cm}
\centering
\includegraphics[width=180mm]{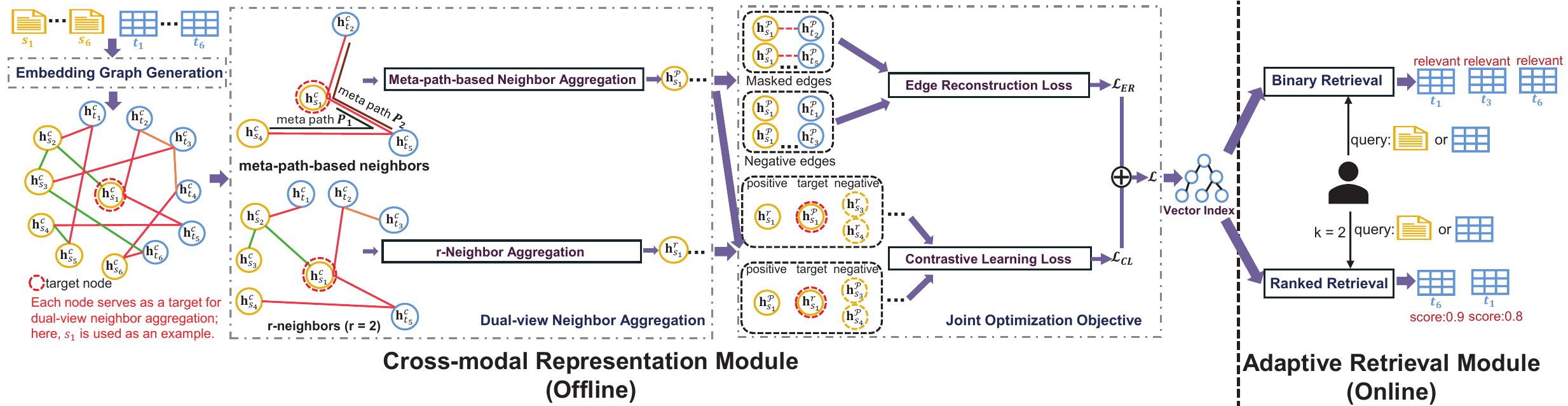}
\caption{\label{fig:framework} Overview of \ours{}.
}
\end{figure*}

\vspace{-0.8em}
\subsection{Problem Setting}
\label{sec:problem}

We study the \emph{unified data discovery problem} in data lakes, where users seek to identify tables relevant to a given query in an intent-agnostic
manner.
As illustrated in Fig.~\ref{fig:example}, a query may take different modalities, including a natural language (NL) statement or a table, and may reflect diverse user intents.

User intents may be \emph{explicit}, targeting task-driven inference -- such as a clearly stated question (Fig.~\ref{subfig:TQA}) or an unverified factual claim (Fig.~\ref{subfig:TFV}) -- or \emph{implicit}, targeting open-ended enrichment, such as a short text (Fig.~\ref{subfig:news}) or a table (Fig.~\ref{subfig:join}) that reflects an underlying information need.
\ours{} treats both NL statements and tables as valid query modalities and aims to support data discovery uniformly across these modalities.

A data lake comprises a collection of heterogeneous tables together with a set of NL statements, which may arise from user queries or downstream applications.
Formally, we define a data lake as $\mathcal{D} = (T, S)$, where $T = \{t_1, \ldots, t_{|T|}\}$ denotes the collection of tables and $S = \{s_1, \ldots, s_{|S|}\}$ denotes the collection of NL statements.

\begin{definition}[Unified Data Discovery]
\label{def:problem}
Given a data lake $\mathcal{D} = (T, S)$ and a query $q$, which can be either an NL statement or a table, 
it aims to retrieve a set of relevant tables $T_{\mathrm{rel}} \subseteq T$ based on the relevance between the query $q$ and candidate tables in $T$.
Two common relevance formulations are considered:

\noindent(1) A relevance scoring function $r: q \times t \rightarrow \mathbb{R}$, where $r(q, t)$ measures the relevance score between $q$ and a candidate table $t$.
Based on this, $T_{\mathrm{rel}}$ is returned as a top-$k$ ranked list of tables, i.e.,
$
T_{\mathrm{rel}} = \langle t_1, \ldots, t_k \rangle$
such that
$r(q,t_1) \ge \cdots \ge r(q,t_k),
\; t_i \in T.
$

\noindent(2) A binary relevance function $f: q \times t \to \{0, 1\}$, where $f(q, t) = 1$ indicates that a candidate table $t$ is relevant to $q$, and $f(q, t) = 0$ otherwise. Based on this, $T_{\mathrm{rel}}$ is returned as all tables identified as relevant, i.e.,
$
T_{\mathrm{rel}} = \{t \in T \mid f(q,t)=1\}.
$
\end{definition}


\vspace{-1em}
\subsection{Framework Overview}
\label{sec:framework}

To solve the unified data discovery problem, we propose \ours{}, a \underline{Uni}fied Data \underline{Disc}overy framework that integrates offline representation learning with online retrieval.
As illustrated in Fig.~\ref{fig:framework}, \ours{} consists of two modules with clearly separated responsibilities.
Together, these two modules enable \ours{} to support queries of different modalities and diverse discovery scenarios.

\subsubsection{Cross-modal Representation Module}
\label{sec:cross_model}

This module serves as the core of \ours{}, aiming to learn unified representations for queries of different modalities and tables under limited supervision.
To this end, it exploits contextual signals that exist in data lakes by modeling both NL statements and tables as nodes in a heterogeneous graph.
In this graph, edges encode multiple types of contextual relationships, including \emph{cross-modal associations} (e.g., claims supported by an evidence table in Fig.~\ref{subfig:TFV} or tables required to answer a question in Fig.~\ref{subfig:TQA}) and \emph{intra-modal relationships} (e.g., joinable tables such as \texttt{State} and \texttt{City} in Fig.~\ref{subfig:join}, or topically related claims in Fig.~\ref{subfig:TFV}).
Building on this graph, the learning process integrates two complementary aggregation mechanisms to capture both semantic and structural context, and jointly optimizes two training objectives to align representations across modalities while preserving relational consistency within each modality.
As a result, the learned representations reside in a shared embedding space that supports direct relevance estimation between queries and tables.
These embeddings are materialized and indexed, serving as the foundation for efficient retrieval across diverse discovery scenarios. We describe the details of this module in Secs.~\ref{sec:representation}--\ref{sec:opt}.

\subsubsection{Adaptive Retrieval Module}
\label{sec:retrieval}
Given the unified representations learned by the cross-modal representation module, this module performs data discovery by retrieving relevant tables for a given query based on the constructed index efficiently and flexibly.
At query time, a query $q$ -- either an NL statement or a table -- is mapped to its embedding $\mathbf{h}_q$ using the trained representation model, which is then used to estimate relevance with the embeddings $\mathbf{h}_t$ of candidate tables $t \in T$.
Following Definition~\ref{def:problem}, the module supports retrieval under two relevance formulations, which we refer to as \emph{ranked retrieval} and \emph{binary retrieval}.

\smallskip
\noindent\textbf{Ranked Retrieval.}
In ranked retrieval, data discovery proceeds via relevance ranking, where the user specifies the number of tables to retrieve, as commonly required in scenarios such as table question answering for focused reasoning.
The module computes a relevance score $r$ using cosine similarity.
Given the query embedding $\mathbf{h}_q$, tables are retrieved from the constructed index, ranked by $r(\mathbf{h}_q, \mathbf{h}_t)$, and the top-$k$ results are returned.
This retrieval procedure runs in sublinear time with respect to the number of tables $|T|$, typically $O(\log |T|)$ per query in practice, and avoids exhaustive similarity computation over the data lake.

\smallskip
\noindent\textbf{Binary Retrieval.}
In binary retrieval, relevance is determined by a binary function $f$, and all tables predicted as relevant are returned. This setting fits discovery scenarios where users do not specify the number of tables to retrieve and aim to obtain comprehensive results. For example, users may want to obtain tables that together provide comprehensive evidence in table-based fact verification.
To efficiently support this setting, the module adopts a two-stage retrieval strategy.
First, it uses the constructed index to retrieve a candidate set of tables for the query embedding $\mathbf{h}_q$, where the candidate size is chosen to be sufficiently larger than typical result sets to mitigate potential recall loss.
Second, a lightweight multi-layer perceptron (MLP) classifier is applied to estimate the binary relevance function $f$.
For each query-table pair $(\mathbf{h}_q, \mathbf{h}_t)$ within the candidate set, $f(\mathbf{h}_q, \mathbf{h}_t)$ determines relevance, and the subset predicted as relevant is returned.
This design avoids evaluating all tables in the data lake while naturally supporting variable-size discovery results.
The overall inference time consists of sublinear candidate retrieval followed by linear-time relevance prediction over the candidate set.
In practice, the candidate size is bounded independently of $|T|$, ensuring scalability to large-scale data lakes.

\section{Graph-Based Cross-Modal Representation Learning}
\label{sec:representation}

As discussed in Sec.~\ref{sec:cross_model}, the cross-modal representation module of \ours{} operates offline to learn shared representations across data modalities in the data lake, providing the foundation for \ours{}.
A key modeling challenge is to enable direct comparison between unstructured NL statements and structured tables without relying heavily on labeled (NL statement, table) pairs.
Prior approaches address this by embedding both modalities into a shared space~\cite{EltabakhKEA23,LeesBKS0021,SilvaB23,WangF23}, bringing semantically aligned pairs closer together.
However, such methods typically require substantial labeled supervision or rely on pseudo-pair synthesis~\cite{EltabakhKEA23}, incurring additional computational overhead.

\smallskip
\noindent\textbf{Key Observation.}
Beyond direct supervision from labeled (NL statement, table) pairs, data lakes naturally exhibit rich \emph{contextual signals} that can serve as indirect supervision.
For instance, tables with similar schemas or overlapping rows often correspond to related queries, while NL statements within a similar topic frequently reference similar table content.
These semantic and relational dependencies provide valuable signals for representation learning, yet are often underexploited in existing cross-modal methods.

\smallskip
\noindent\textbf{Graph-based Cross-Modal Representation Learning.}
Motivated by this observation, we instantiate the cross-modal representation module using a graph learning model.
We model NL statements and tables as nodes in a heterogeneous graph, where edges encode both cross-modal and intra-modal contextual relationships.
Each node iteratively refines its representation by interacting with different types of neighbors, allowing information to propagate across modalities and capture contextual signals.

Through this graph modeling, representations of NL statements and tables are enriched by semantically related or co-occurring nodes of either type.
As a result, the model learns high-quality cross-modal representations without relying heavily on labeled (NL statement, table) pairs.
As illustrated in Fig.~\ref{fig:framework}, the cross-modal representation module consists of three key components, which we describe in the following sections:

\noindent(1) \emph{Embedding Graph Generation.} To support cross-modal representation learning under limited supervision, we construct a heterogeneous graph with NL statements and tables as nodes, connected by three types of edges: NL statement–Table, Table–Table, and NL statement–NL statement.
Rather than encoding node contents independently, we project heterogeneous node contents into a shared latent space by modeling inter-content dependencies using a bi-directional LSTM, which captures contextual relationships among tabular and textual elements.
This allows heterogeneous nodes to be comparable while retaining fine-grained semantic cues (Sec.~\ref{sec:nodecontent}).

\noindent(2) \emph{Dual-View Neighbor Aggregation.}
To effectively exploit contextual signals in the graph, we adopt a dual-view aggregation strategy.
Meta-path-based aggregation captures semantically meaningful relational patterns from typed interactions between NL statements and tables, while $r$-neighbor aggregation incorporates broader structural context by hierarchically aggregating information from higher-order neighbors.
These two views provide complementary signals for refining node representations (Sec.~\ref{sec:neighbor}).

\noindent(3) \emph{Joint Optimization Objective.} 
To learn robust node representations under available supervision, we jointly optimize a contrastive loss and an edge reconstruction loss.
The former enforces consistency across representations learned from different aggregation views, while the latter preserves observed relational structure among nodes.
Together, these objectives encourage view-consistent and structurally faithful representations (Sec.~\ref{sec:opt}).

\vspace{-0.6em}
\section{Embedding Graph Generation}
\label{sec:nodecontent}

To support cross-modal representation learning under limited supervision, \ours{} constructs a heterogeneous graph over tables and NL statements to capture contextual relationships, and then encodes node contents into a unified embedding space, forming a heterogeneous embedding graph.

\smallskip
\noindent\textbf{Heterogeneous Graph.}
We construct a heterogeneous graph $G = (V, E)$ based on the problem setting in Sec.~\ref{sec:problem}.
The node set $V = S \cup T$ consists of two types of nodes: \emph{NL statements} ($s \in S$) and \emph{tables} ($t \in T$).
The edge set $E$ contains three types of edges:

\noindent$\bullet$ \emph{NL statement-Table edges} encode co-occurrence or semantic relevance relationships between NL statements and tables, capturing diverse NL-based discovery intents.
During training, these edges are initialized with limited ground-truth relevance annotations, enabling the model to infer relevance for unseen pairs.

\noindent$\bullet$ \emph{Table-Table edges} encode semantic or structural similarity between tables.
Their construction is adapted to downstream requirements.
When ground-truth links are available (e.g., in table-based discovery), we use them directly; otherwise, we estimate table relevance scores and connect each table to its top-$K$ most similar neighbors.

\noindent$\bullet$ \emph{NL statement-NL statement edges} encode topical relatedness among NL statements.
We construct these edges by computing pairwise similarity scores and connecting each NL statement to its top-$K$ most similar peers.

These edges provide complementary cross-modal and intra-modal contextual signals, which are crucial for representation learning under sparse supervision. 

\smallskip
\noindent\emph{Remarks}.
In our experiments, we employ CMDL~\cite{EltabakhKEA23} to estimate table-table relevance, as it leverages both syntactic and semantic signals corresponding to the relationships we aim to encode.
For NL statement-NL statement edges, we use BM25~\cite{RobertsonZ09} due to its strong empirical performance~\cite{Kong0SSLFRK24}.
We emphasize that these methods are concrete instantiations of the underlying relationship types, and our framework does not rely on any particular choice.

\smallskip
\noindent\textbf{Unified Node Content Encoding.}
A key challenge in this heterogeneous graph is to unify heterogeneous node contents across modalities.
NL statements and tables differ substantially in structure, semantic granularity, and raw feature dimensionality, which makes direct comparison difficult.
Therefore, a modality-agnostic encoding is required to project heterogeneous node contents into a shared latent space before graph-based aggregation.

To this end, we associate each node $v$ with a content set $C_v = \{\mathbf{x}^v_1, \ldots, \mathbf{x}^v_{|C_v|}\}$, where each content element $\mathbf{x}^v_i \in \mathbb{R}^{d_c}$ is a feature vector.
For an NL statement $s \in S$, $C_s$ contains a single textual feature.
For a table $t \in T$, $C_t$ consists of features associated with all columns and the table caption, if available.


A simple baseline is to use a linear transformation for each node type to map the content vectors into a shared space. 
Specifically, for a node $v$ of type $a$ (i.e., NL statement or table), we have:
\vspace{-0.5em}
\begin{equation}
\label{eq:linear}
\mathbf{h}_{v}^c = \textstyle \sum_{i = 1}^{|C_v|}\mathbf{W}_a\mathbf{x}^{v}_i
\vspace{-0.6em}
\end{equation}
where $\mathbf{h}_{v}^c \in \mathbb{R}^{d \times 1}$, $d$ is the content embedding dimension, $\mathbf{W}_a \in \mathbb{R}^{d \times d_c}$ is a trainable projection matrix specific to node type $a$.

However, a simple linear approach fails to capture dependencies between content elements. 
This is particularly important for tables, where contextual relationships between columns (e.g., attribute correlation) can be crucial to the interpretation and reasoning needed in tasks such as answering questions or verifying claims.
To overcome this limitation, we employ a bi-directional LSTM (Bi-LSTM) encoder~\cite{ZhangSHSC19} that models inter-content dependencies and context. 
Specifically, the content embedding for a node $v$ is computed as:
\vspace{-0.3em}
\begin{equation}
\label{eq:lstm}
\mathbf{h}_{v}^c = {{\textstyle \sum_{i = 1}^{|C_v|}[\overrightarrow{\mathrm{LSTM}}(\mathbf{x}^v_1, ..., \mathbf{x}^v_{|C_v|})_i \oplus \overleftarrow{\mathrm{LSTM}}(\mathbf{x}^v_1, ..., \mathbf{x}^v_{|C_v|})_i]} / {|C_v|}}
\vspace{-0.3em}
\end{equation}
where $\overrightarrow{\mathrm{LSTM}}(\cdot)_i$ and $\overleftarrow{\mathrm{LSTM}}(\cdot)_i$ denotes the forward and backward LSTM hidden states at the $i$-th position in the content set, respectively.
The operator $\oplus$ denotes concatenation. 
This yields a fixed-length embedding $\mathbf{h}_v^c \in \mathbb{R}^{d \times 1}$ that incorporates contextual information from both directions of the content set. 
The final embedding is obtained by averaging the forward and backward hidden states across all content elements. 
This yields a heterogeneous embedding graph with unified node content representations, which is used for subsequent graph-based aggregation.

\begin{example}
As illustrated in Fig.~\ref{fig:nodecontent}, an example of the constructed heterogeneous graph contains six NL statement nodes and six table nodes. 
The red lines denote NL statement-Table edges, green lines represent NL statement-NL statement edges, and orange lines represent the 
Table–Table edges.

We initialize each content element of an NL statement or a table using a pretrained BERT model~\cite{DevlinCLT19}, and then apply a Bi-LSTM to encode node contents. 
For each NL statement $s_i$ ($i \in \{1, 2, 3, 4, 5, 6\}$), the output of the Bi-LSTM is used as the content embedding $\mathbf{h}_{s_i}^c$. 
For each table $t_i$ ($i \in \{1, 2, 3, 4, 5, 6\}$), each content element is processed by the Bi-LSTM sequentially. 
The forward and backward hidden states at each position are concatenated together, and the resulting vectors are aggregated using mean pooling to obtain a table content embedding $\mathbf{h}_{t_i}^c$.
This process results in a heterogeneous embedding graph with unified node content representations.
\end{example}

\vspace{-0.5em}
\section{Dual-view Neighbor Aggregation}
\label{sec:neighbor}

Given the heterogeneous embedding graph constructed in the previous section, the next challenge is how to aggregate contextual signals from neighboring nodes to refine node representations. A key design consideration is the choice of neighbors for aggregation, as restricting aggregation to only direct (first-order) neighbors may exclude informative contextual signals from distant nodes. 

To address this, we employ two complementary neighbor aggregation views: \emph{meta-path-based neighbor aggregation} and \emph{$r$-neighbor aggregation}. The former captures semantically meaningful relational patterns arising from heterogeneous interactions between tables and NL statements, while the latter incorporates broader structural context by hierarchically aggregating information from higher-order neighbors. Together, these two views provide complementary signals for refining cross-modal node representations.


\begin{figure}[t]
\setlength{\abovecaptionskip}{0cm}
\setlength{\belowcaptionskip}{-0.6cm}
\centering
\includegraphics[width=60mm]{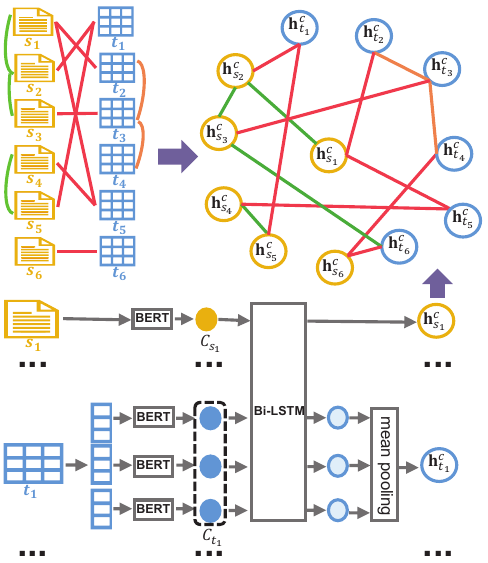}
\caption{\label{fig:nodecontent}{Embedding Graph Generation.} 
}
\end{figure}

\vspace{-0.5em}
\subsection{Meta-path-based Neighbor Aggregation}
\label{sec:mpnei}

Meta-path-based neighbor aggregation captures high-order semantic relationships by modeling typed relational patterns in a heterogeneous graph.
A meta-path defines how two types of nodes are indirectly connected via a sequence of node and edge types~\cite{fu2020magnn}, enabling aggregation beyond direct connections.

We define two meta-paths commonly arising in discovery scenarios.
The first is \emph{NL statement-Table-NL statement} ($s-t-s$), which connects NL statements that reference the same table.
This pattern frequently appears in table-based fact verification~\cite{GuF0NZ022,ChenWCZWLZW20}, where multiple claims are supported or refuted by the same evidence table.
The second is \emph{Table-NL statement-Table} ($t-s-t$), which connects tables referenced by the same NL statement.
This pattern is common in table question answering~\cite{PalYKR23,wummqa}, where multiple tables are required to answer a question.

For a given meta-path $P$, we define the \emph{meta-path-based neighbors} $\mathcal{N}^P(v)$ of a node $v$ as the set of nodes reachable from $v$ via any instance of $P$.
When multiple path instances connect the same node pair, each instance is treated separately during aggregation to account for different relational contexts.
For symmetric meta-paths, $\mathcal{N}^P(v)$ also includes a self-reference to $v$.



The design of meta-path-based neighbor aggregation considers three key goals:
(i) preserving information from intermediate nodes along a meta-path to capture multi-hop dependencies,
(ii) integrating multiple meta-paths to uncover diverse and complementary semantic relationships, and
(iii) differentiating neighbor contributions based on both the meta-path and contextual relationships.

To realize these design goals, we employ a two-level attention mechanism that combines \emph{node-level} and \emph{path-level} attention.
Node-level attention differentiates the contributions of neighbors along a given meta-path, including both end and intermediate nodes, while path-level attention learns the relative importance of different meta-paths when integrating semantic signals.
These two mechanisms enable weighted aggregation across and within meta-paths, capturing fine-grained semantic relationships while accounting for broader contextual signals.






\smallskip
\noindent\textbf{Node-level attention.} To compute the relative weight contribution for each neighbor from a meta-path to a target node, we employ a self-attention mechanism~\cite{VaswaniSPUJGKP17}.
Therefore, for a node pair $(v, v')$ connected by a meta-path $P$, the attention score $\alpha_{vv'}^P$ represents the influence of neighbor node $v'$ on the target node $v$. 
It is computed in a normalized form as follows:
\vspace{-0.3em}
\begin{equation}
\alpha^P_{v,v'} = {{\exp\{\mathrm{LeakyReLU}(\mathbf{a}_P^T[\mathbf{h}_v^c \parallel \mathbf{h}_{v'}^c])\}} \over {\sum_{v'' \in \mathcal{N}^P(v)}\exp\{\mathrm{LeakyReLU}(\mathbf{a}_P^T[\mathbf{h}_v^c \parallel \mathbf{h}_{v''}^c])\}}},
\vspace{-0.3em}
\end{equation}
where $\mathrm{LeakyReLU}$ denotes the leaky rectified linear unit function, $\mathbf{a}_P \in \mathbb{R}^{2d \times 1}$ is the learnable attention vector specific to a meta-path $P$, and $\parallel$ denotes the vector concatenation. 
This node-level attention mechanism enables differentiated weighting of neighbors along a given meta-path, capturing fine-grained contextual dependencies.

By aggregating the content embeddings from the meta-path-based neighbors using the weights derived from the corresponding attention scores, the aggregated neighbor embedding for a target node $v$ based on meta-path $P$ is computed as follows:
\vspace{-0.3em}
\begin{equation}
\mathbf{h}_v^P = \sigma\left(\textstyle \sum_{v' \in \mathcal{N}^P(v)}\alpha^P_{v,v'}\mathbf{h}_c(v')\right) 
\vspace{-0.3em}
\end{equation}
where $\mathbf{h}_v^P \in \mathbb{R}^{d \times 1}$, $\sigma$ is a nonlinear activation function, $\mathbf{h}_c(v')$ is the content embedding of neighbor $v'$, and $\alpha^P_{v, v'}$ denotes the attention-based weight assigned to node $v'$ relative to $v$ under path $P$. 

\smallskip
\noindent\textbf{Path-level attention.} After aggregating the node information from each meta-path, the node embedding can be further refined by integrating the semantic signals revealed from other meta-paths.
Following prior work on heterogeneous graph modeling~\cite{fu2020magnn}, we apply a path-level attention mechanism to learn the relative importance of different meta-paths, which we then use to aggregate the path-specific vector representations as follows:
\begin{equation}
\label{eq:MPemb}
\mathbf{h}^{\mathcal{P}}_v = \sigma\left(\textstyle \sum_{P \in \mathcal{P}}\beta^P \mathbf{h}_v^P\right)
\end{equation}
where $\mathbf{h}^{\mathcal{P}}_v \in \mathbb{R}^{d \times 1}$  is the path-aggregated embedding, 
$\mathcal{P}$ denotes the path set in a heterogeneous graph, and $\beta^{P}$ is the attention-based weight assigned to the contribution of path $P$ to node $v$.

To determine $\beta^P$, we first compute the path importance score $\omega^P$ by averaging the transformed path-specific vectors as
\vspace{-0.3em}
$$
\omega^P = {1 \over {|V|}}\textstyle \sum_{v \in V} \mathbf{q}_0^T \cdot \tanh (\mathbf{W}_0 \cdot \mathbf{h}_v^P + \mathbf{b}_0)
$$
where $\mathbf{q}_0 \in \mathbb{R}^{d \times 1}$ is the path-level attention vector, and $\mathbf{W}_0 \in \mathbb{R}^{d \times d}$ and $\mathbf{b}_0 \in \mathbb{R}^{d \times 1}$ are shared learnable parameters. 
We then apply a softmax function to normalize the path importance scores and obtain a normalized path attention weight as follows:
\vspace{-0.3em}
$$
\beta^P = {{\exp(\omega^P)} / {\textstyle \sum_{P' \in \mathcal{P}}\exp(\omega^{P'})}}
$$

This aggregation produces a meta-path-aware embedding for each target node by combining information from multiple meta-paths using Eq.~\ref{eq:MPemb}.
We denote the resulting embedding for an NL statement node $s \in S$ as $\mathbf{h}^{\mathcal{P}}_s$ and for a table node $t \in T$ as $\mathbf{h}^{\mathcal{P}}_t$. 

\begin{figure}[t]
\setlength{\abovecaptionskip}{0cm}
\setlength{\belowcaptionskip}{-0.5cm}
\centering
\includegraphics[width=78mm]{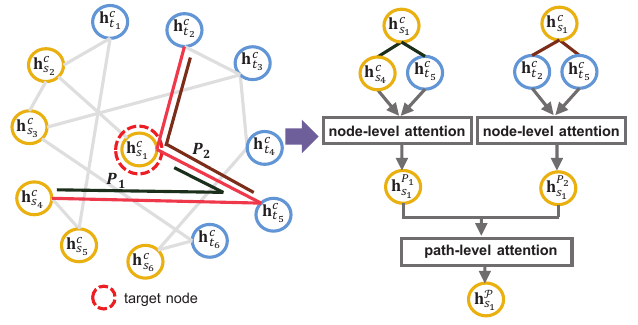}
\caption{\label{fig:mp}{Example of meta-path-based neighbor aggregation.} 
}
\end{figure}

\begin{example}
As shown in Fig.~\ref{fig:mp}, given a target node $s_1$, the meta-path instance $s_1$–$t_5$–$s_4$ for $P_1$ yields the meta-path–based neighbors $t_5$ and $s_4$, while the instance $t_2$–$s_1$–$t_5$ for $P_2$ yields the neighbors $t_2$ and $t_5$. 
We apply node-level attention to aggregate the meta-path–based neighbors for each path, resulting in the embeddings $\mathbf{h}_{s_1}^{P_1}$ and $\mathbf{h}_{s_1}^{P_2}$.
These are then combined using path-level attention to compute the final path-aggregated embedding $\mathbf{h}_{s_1}^{\mathcal{P}}$.
\end{example}

\vspace{-0.3em}
\subsection{$r$-Neighbor Aggregation}
\label{sec:rnei}

In addition to meta-path-based aggregation, we incorporate a complementary structural view by hierarchically aggregating information from higher-order neighbors.
Specifically, the 1-hop neighbors provide 0th-layer embeddings, 2-hop neighbors provide 1st-layer embeddings, and so on, up to $r$-hop neighbors for the $(r-1)$th layer embeddings. 
We refer to this hierarchical aggregation process as $r$-\emph{neighbor aggregation}.

To implement $r$-neighbor aggregation, we employ a multi-layer graph convolutional network (GCN)~\cite{KipfW17} as a generic mechanism for hierarchical neighbor aggregation.
While standard GCNs are designed for homogeneous graphs, we adapt the propagation to our heterogeneous setting by constructing a unified $(|S|+|T|) \times (|S|+|T|)$ adjacency matrix $A$ that integrates all node and edge types while preserving the symmetry required by GCN-based propagation.



We index the adjacency matrix such that NL statement-NL statement, NL statement-Table, and Table-Table edges are all encoded within a single matrix.
The weight assigned to an NL statement-Table edge $({s_i},{t_j})$ is represented as $a_{i\{j+|S|\}}$, an NL statement-NL statement edge $({s_i}, {s_j})$ as $a_{ij}$, a Table-NL statement edge $({t_i}, {s_j})$ as $a_{\{i+|S|\}j}$, and a Table-Table edge $({t_i}, {t_j})$ as $a_{\{i+|S|\}\{j+|S|\}}$.
To ensure numerical stability and incorporate self-loops, we apply standard normalization to obtain $\hat{A} = \tilde{D}^{-1/2}\tilde{A}\tilde{D}^{-1/2}$, where $\tilde{A} = A + I_{|S|+|T|}$, $I_{|S|+|T|}$ is an identity matrix, and $\tilde{D}$ is a degree matrix containing the diagonal entries $\tilde{D}_{ii} = \sum_j\tilde{A}_{ij}$.

Let $\mathbf{H}^l \in \mathbb{R}^{(|S|+|T|) \times d}$ denote the latent feature matrix at layer $l$.
The propagation rule for $r$-neighbor aggregation is defined as:
\vspace{-0.3em}
\begin{equation}
\label{eq:rnemb}
\mathbf{H}^{l+1} = \sigma(\hat{A}\mathbf{H}^{l}\mathbf{W}^l)
\vspace{-0.3em}
\end{equation}
where $\mathbf{W}^l \in \mathbb{R}^{d \times d}$ is a learnable weight matrix specific to layer $l$, and $\sigma$ is an activation function.
After $r$ layers, the final node embeddings are obtained from $\mathbf{H}^{r}$.
For an NL statement node ${s_i} \in S$, the embedding is $\mathbf{h}_{s_i}^{r} = \mathbf{H}^{r}[i]$, and for a table node ${t_i} \in T$, it is $\mathbf{h}_{t_i}^{r} = \mathbf{H}^{r}[i+|S|]$.
This produces $r$-neighbor-aware node embeddings that capture broader structural context, forming the structural aggregation view in our dual-view framework.

\begin{figure}[t]
\setlength{\abovecaptionskip}{0cm}
\setlength{\belowcaptionskip}{-0.5cm}
\centering
\includegraphics[width=82mm]{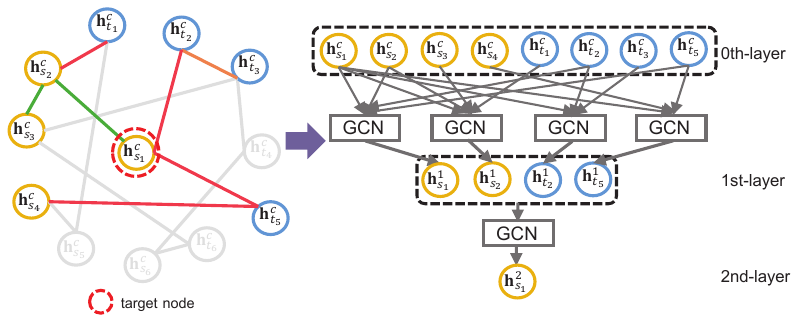}
\caption{\label{fig:rn}{Example of $r$-neighbor aggregation.} 
}
\end{figure}

\begin{example}
As shown in Fig.~\ref{fig:rn}, we perform 2-hop neighbor aggregation. 
The 1-hop neighbors for the target node $s_1$ include the NL statement $s_2$ and the tables $t_2$ and $t_5$. 
By aggregating the content embeddings (i.e., the 0-th layer embeddings) for these neighbors using a GCN, we obtain the 1st-layer embedding of $s_1$. 
Likewise, we compute the 1st-layer embeddings of $s_2$, $t_2$, and $t_5$ by aggregating the respective 1-hop neighbors. 
A second round of GCN aggregation for these 1st-layer embeddings produces the 2nd-layer embedding of $s_1$, denoted as the node embedding under 2-hop neighbor aggregation, $\mathbf{h}^2_{s_1}$.
\end{example}

\section{Joint Optimization Objective}
\label{sec:opt}

After applying the dual-view neighbor aggregation in Sec.~\ref{sec:neighbor}, each node is associated with two complementary embeddings that capture semantic and structural context, respectively.
To learn robust cross-modal representations under limited supervision, these two views need to be aligned while preserving meaningful relationships across different node types.
To this end, we adopt a joint optimization objective that combines a cross-view contrastive learning loss, which aligns the two aggregation views of the same node, with an edge reconstruction loss that explicitly models interactions between tables and NL statements.


\subsection{Contrastive Learning Loss}
For each node $v \in V$, we obtain two embeddings: $\mathbf{h}_{v}^{\mathcal{P}}$ from meta-path-based aggregation and $\mathbf{h}_{v}^{r}$ from $r$-neighbor aggregation.
These embeddings capture complementary semantic and structural contexts and are treated as two views of the same node. To align the two views, we adopt a cross-view contrastive learning objective.
For a given node $v$, we treat $\mathbf{h}_{v}^{r}$ as the positive sample for $\mathbf{h}_{v}^{\mathcal{P}}$, while embeddings of other nodes of the same type are treated as negative samples. 
Restricting negatives to the same node type avoids trivial alignment across heterogeneous node types and encourages meaningful view consistency. 
For example, in Fig.~\ref{fig:framework}, for the target node $s_1$, $\mathbf{h}_{s_1}^{r}$ serves as the positive sample, whereas $\mathbf{h}_{s_3}^{r}$ and $\mathbf{h}_{s_4}^{r}$ are used as negative samples.

We instantiate the contrastive objective using an InfoNCE-based loss~\cite{qiu2020gcc,you2020graph,wang2021self}, defined as:
$$
\mathcal{L}_{\mathrm{CL}}^{\mathcal{P}} = \sum_{v \in V}-\log{{\exp(\mathrm{sim}(\mathbf{h}_{v}^{\mathcal{P}}, \mathbf{h}_{v}^{r})/\tau)} \over {\sum_{v' \in V^-} \exp(\mathrm{sim}(\mathbf{h}_{v}^{\mathcal{P}}, \mathbf{h}_{v'}^{r})/\tau)}}
$$
where $V^-$ is the set of sampled negative nodes of the same type as $v$, $\mathrm{sim}(\cdot)$ denotes cosine similarity, and $\tau$ is a temperature parameter that adjusts the importance of hard negatives.


To facilitate loss normalization and stabilize training, we characterize an upper bound of the contrastive learning loss.
\vspace{-0.5em}
\begin{theorem}
\label{theo:loss_cl}
The contrastive learning loss $\mathcal{L}_{\mathrm{CL}}^{\mathcal{P}}$ is upper-bounded as:
$\mathcal{L}_{\mathrm{CL}}^{\mathcal{P}} \leq \sum_{v \in V} \log(|V^-| + 1)$.
\end{theorem}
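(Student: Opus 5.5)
The plan is to bound $\mathcal{L}_{\mathrm{CL}}^{\mathcal{P}}$ one node at a time and then sum over $v \in V$. I would read the InfoNCE denominator in the usual way, with the positive pair included, so that it runs over $V^- \cup \{v\}$. This makes each summand a negative log-probability over $|V^-|+1$ candidates. Writing $a_v = \mathrm{sim}(\mathbf{h}_v^{\mathcal{P}}, \mathbf{h}_v^{r})$ and $b_{v,v'} = \mathrm{sim}(\mathbf{h}_v^{\mathcal{P}}, \mathbf{h}_{v'}^{r})$, the per-node loss becomes
\[
\ell_v = \log\Bigl(1 + \textstyle\sum_{v' \in V^-} \exp\bigl((b_{v,v'} - a_v)/\tau\bigr)\Bigr).
\]
The target $\sum_{v} \log(|V^-|+1)$ then follows by summation once I show $\ell_v \le \log(|V^-|+1)$ for each $v$.

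For the per-node bound, $\log$ is monotone, so it suffices that every exponent satisfies $(b_{v,v'} - a_v)/\tau \le 0$. That is, the positive view $\mathbf{h}_v^{r}$ must be at least as cosine-similar to $\mathbf{h}_v^{\mathcal{P}}$ as any sampled negative. Under that condition each of the $|V^-|$ exponentials is at most $1$, and $\ell_v \le \log(1+|V^-|)$ follows directly. Equality holds exactly when all similarities coincide, which is the uninformative, uniform-score configuration. This matches the role the paper gives the theorem: $\log(|V^-|+1)$ is the loss of a chance-level model, and it serves as the normalizing constant.

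The main obstacle is that the condition $a_v \ge b_{v,v'}$ does not follow from the definitions. Cosine similarities only lie in $[-1,1]$. Without further assumptions, the best unconditional bound is
\[
\ell_v \le \log\bigl(1 + |V^-| e^{2/\tau}\bigr),
\]
obtained by taking $b_{v,v'} - a_v \le 2$. So the proof must make that condition explicit.

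My first approach would be to state the condition as the regime in which the bound is used. It holds at initialization and in any training regime where the model performs no worse than chance, meaning the positive is never ranked below a negative. I would also remark that, in expectation over uniformly random embeddings, the bound holds by symmetry and Jensen's inequality. If an unconditional statement is required, I would instead prove the $e^{2/\tau}$-inflated bound above. Either version gives a finite, sample-size-dependent constant suitable for normalizing the loss.
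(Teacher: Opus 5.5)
Your diagnosis is correct, and it is sharper than the paper's own argument. Without extra hypotheses the statement is false. If $b_{v,v'}>a_v$ for every $v'\in V^-$, then each exponential exceeds $1$ and $\ell_v>\log(|V^-|+1)$, and nothing in the two encoders rules this configuration out.

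The paper does not get around this. It only evaluates the loss at the degenerate configuration where all similarities coincide, obtains $\log(|V^-|+1)$ per node, and declares that value an upper bound. That is exactly the equality case of your conditional statement, not a proof of the unconditional one. Its factor $|V^-|+1$ also shows that it tacitly puts the positive in the denominator, as you do, even though the displayed loss sums over $V^-$ only. So the defensible versions are the ones you give: the bound under an explicit hypothesis, or the unconditional bound $\log(1+|V^-|e^{2/\tau})$.

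What needs fixing is how you justify the hypothesis.

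First, the condition $a_v\ge b_{v,v'}$ for all $v'\in V^-$ says the positive is ranked first for every anchor. That is perfect discrimination, not ``no worse than chance''. The chance-level condition is $p_v\ge 1/(|V^-|+1)$, where $p_v$ is the softmax weight of the positive. Equivalently, $\sum_{v'\in V^-}e^{(b_{v,v'}-a_v)/\tau}\le|V^-|$. This yields $\ell_v\le\log(|V^-|+1)$ just as directly, and it is strictly weaker than your condition.

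Second, drop the claims about initialization and random embeddings. The initialization claim is unsupported. The random-embedding claim is backwards. If the $|V^-|+1$ scores are exchangeable, then the softmax weights are exchangeable and sum to $1$, so $\mathbb{E}[p_v]=1/(|V^-|+1)$. Convexity of $-\log$ then gives $\mathbb{E}[\ell_v]\ge\log(|V^-|+1)$, with strict inequality unless the scores coincide almost surely. A randomly initialized model therefore sits at or above the claimed bound in expectation, not below it.
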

\vspace{-0.5em}
\begin{proof}
The bound follows from a degenerate case where the model fails to distinguish the positive sample from negative ones.
That is, when the similarity between positive and negative pairs becomes indistinguishable. We have:
$$
\exp(\mathrm{sim}(\mathbf{h}_{v}^{\mathcal{P}}, \mathbf{h}_{v}^{r})/\tau) \approx \exp(\mathrm{sim}(\mathbf{h}_{v}^{\mathcal{P}}, \mathbf{h}_{v'}^{r})/\tau), \forall v' \in V^-
$$
Thus, the denominator in the contrastive learning loss becomes
\vspace{-0.2em}
$$
\sum_{v' \in V^-}\exp(\mathrm{sim}(\mathbf{h}_{v}^{\mathcal{P}}, \mathbf{h}_{v'}^{r})/\tau) \approx (|V^-| + 1)\exp(\mathrm{sim}(\mathbf{h}_{v}^{\mathcal{P}}, \mathbf{h}_{v'}^{r})/\tau).
$$
\vspace{-0.2em}
Hence, the loss can be approximated as
$$
\mathcal{L}_{\mathrm{CL}}^{\mathcal{P}} \approx  \sum_{v \in V} -\log\frac{1}{|V^-| + 1} = \sum_{v \in V} \log(|V^-| + 1),
$$
which establishes the stated upper bound.
\end{proof}
\vspace{-0.5em}

Based on the above upper bound, we normalize $\mathcal{L}_{\mathrm{CL}}^{\mathcal{P}}$ as
$
\hat{\mathcal{L}}_{\mathrm{CL}}^{\mathcal{P}} = \frac{\mathcal{L}_{\mathrm{CL}}^{\mathcal{P}}}{\sum_{v \in V} \log(|V^-| + 1)}
$.
Similarly, $\hat{\mathcal{L}}_{\mathrm{CL}}^{r}$ is defined by swapping the roles of the two views, treating $\mathbf{h}_{v}^{r}$ as the anchor and $\mathbf{h}_{v}^{\mathcal{P}}$ as the positive, with the meta-path-aware embeddings of other nodes as negatives.

The overall contrastive learning loss is a weighted combination of both contrastive views:
\begin{equation}
\label{eq:clloss}
\mathcal{L}_{\mathrm{CL}}= \lambda \hat{\mathcal{L}}_{\mathrm{CL}}^{\mathcal{P}} + (1-\lambda) \hat{\mathcal{L}}_{\mathrm{CL}}^{r}
\end{equation}
where $\lambda \in [0, 1]$ balances the effect of two aggregation functions.

\subsection{Edge Reconstruction Loss} 
While contrastive learning enforces consistency between different aggregation views of the same node, it does not explicitly model relational structures between nodes.
To complement this, we introduce an edge reconstruction loss that regularizes the learned representations to preserve observed cross-modal and intra-modal relationships in the graph. 

Specifically, we consider NL statement–Table edges and Table–Table edges in Sec.~\ref{sec:nodecontent}.
A subset of these edges is randomly masked using a Bernoulli distribution, $E_m \sim \mathrm{Bernoulli}(p)$, where $p < 1$ is the masking ratio.
The masked edges are treated as positive samples, while an equal number of negative samples $E^-$ are drawn from unconnected node pairs of the same types to match the corresponding edge types. For example, in Fig.~\ref{fig:framework}, two edges, $s_1$-$t_2$ and $s_1$-$t_5$, are sampled as masked positive edges, while two unconnected pairs, ($s_1$, $t_1$) and ($s_1$, $t_3$), are sampled as negative edges.

We define a decoder $h_\omega$ parameterized by $\omega$ to predict the probability that an edge exists between two nodes:
$$
\mathcal{L}^+ = {1 \over {|E_{m}|}} \sum_{{v, v'} \in E_{m}} \log h_{\omega}(\mathbf{h}^{\mathcal{P}}_{v}, \mathbf{h}^{\mathcal{P}}_{v'}),
$$
$$
\mathcal{L}^- = {1 \over {|E^-|}} \sum_{{v, v'} \in E^-} \log (1-h_{\omega}(\mathbf{h}^{\mathcal{P}}_{v}, \mathbf{h}^{\mathcal{P}}_{v'})),
$$
$$
\tilde{\mathcal{L}}_{\mathrm{ER}} = -(\mathcal{L}^+ + \mathcal{L}^-).
$$
Here, $
h_{\omega}(\mathbf{h}^{\mathcal{P}}_{v}, \mathbf{h}^{\mathcal{P}}_{v'}) = \mathrm{sigmoid}\left(\mathrm{MLP}(\mathbf{h}^{\mathcal{P}}_{v} \circ \mathbf{h}^{\mathcal{P}}_{v'})\right)
$
where MLP denotes a multi-layer perceptron and $\circ$ indicates element-wise multiplication.
We use meta-path–aware embeddings $\mathbf{h}^{\mathcal{P}}$ for edge reconstruction, as they explicitly encode semantic relational patterns aligned with the reconstructed edge types. 

In practice, the binary cross-entropy loss can exhibit large gradients when positive edges are assigned low predicted probabilities, which may destabilize training.
To improve robustness and smooth optimization, we apply a logarithmic transformation to the reconstruction loss:
\begin{equation}
\label{eq:erloss}
\mathcal{L}_{\mathrm{ER}} = \log(\tilde{\mathcal{L}}_{\mathrm{ER}} + 1).
\end{equation}

Finally, we jointly optimize the contrastive learning loss and the edge reconstruction loss:
\begin{equation}
\label{eq:loss}
\mathcal{L} = \mathcal{L}_{\mathrm{ER}} + \beta * \mathcal{L}_{\mathrm{CL}}
\end{equation}
where $\beta$ is a hyperparameter that controls the trade-off between the two losses. 
Appendix~\ref{ap:exam_loss} provides an example of the joint objective.

\setlength{\textfloatsep}{0pt}
\begin{algorithm}[t]
\renewcommand{\algorithmicrequire}{\textbf{Input:}}
\renewcommand\algorithmicensure{\textbf{Output:}}
\caption{Training Procedure}
\label{alg:training}
\begin{algorithmic}[1]
\Require Graph $G=(V,E)$ with $V=S\cup T$, training nodes $S^{\mathrm{train}}$ and $T^{\mathrm{train}}$, validation nodes $S^{\mathrm{val}}$ and $T^{\mathrm{val}}$;
\Ensure Optimized parameters $\Theta$;
\State Construct contrastive samples by sampling same-type negatives for each 
$v \in S^{\mathrm{train}} \cup T^{\mathrm{train}}$ for contrastive learning;
\State Obtain $E_{ST}$ including NL statement–Table edges and Table–Table edges incident to training nodes;
\State Sample masked edges $E_m \subset E_{ST}$ and negative edges $E^- \notin E_{ST}$;
\For{each epoch}
\State Compute
$\mathbf{h}_v^{\mathcal{P}}$ and $\mathbf{h}_v^{r}$ for anchors and their negatives;
\State Compute $\mathcal{L}_{\mathrm{CL}}$ by Eq.~\ref{eq:clloss};
\State Compute $\mathcal{L}_{\mathrm{ER}}$ on $E_m$ and $E^-$ by Eq.~\ref{eq:erloss};
\State $\mathcal{L} \gets \mathcal{L}_{\mathrm{ER}} + \beta * \mathcal{L}_{\mathrm{CL}}$ using Eq.~\ref{eq:loss};
\State Update the parameters $\Theta$ by Adam optimizer based on $\mathcal{L}$;
\State Keep the checkpoint with the best validation performance;
\EndFor
\Return $\Theta$.
\end{algorithmic}
\end{algorithm}

\smallskip
\noindent\textbf{Representation Materialization and Indexing.}
The overall training procedure is summarized in Alg.~\ref{alg:training}.
The model is trained by jointly optimizing the contrastive learning loss and the edge reconstruction loss.
After training, the learned node representations are fixed.
In particular, we use the embeddings derived from meta-path-based neighbor aggregation as semantic representations that encode relational context of target-type nodes. 

To enable efficient retrieval over large-scale data lakes, we materialize the learned table embeddings into an approximate nearest neighbor (ANN) index.
Specifically, we employ \emph{Annoy}~\cite{LiZSWLZL20}, a space-partitioning-based ANN method that provides a favorable trade-off between retrieval efficiency and accuracy, as demonstrated in prior work~\cite{EltabakhKEA23}.
The index is constructed and serves as the foundation for the adaptive retrieval module described in Sec.~\ref{sec:retrieval}.

\section{Experiments}
\label{sec:experiment}

In this section, we evaluate the effectiveness and efficiency of \ours{}.
We aim to answer the following questions.

\noindent$\bullet$ \textbf{Q1}. How does \ours{} perform on NL-based data discovery?
We evaluate \ours{} using NL queries with binary retrieval and ranked retrieval in terms of effectiveness and efficiency. (Sec.~\ref{sec:NL2table})

\noindent$\bullet$ \textbf{Q2}. How does \ours{} perform on table-based data discovery?
We assess the effectiveness and efficiency of \ours{} using tables as queries under binary retrieval and ranked retrieval. (Sec.~\ref{sec:table2table})

\noindent$\bullet$ \textbf{Q3}. How does \ours{} benefit downstream applications?
We evaluate the impact of \ours{} on an end-to-end table-based fact verification task, analyzing accuracy and runtime. (Sec.~\ref{sec:app})

\noindent$\bullet$ \textbf{Q4}. How do key factors affect the performance of \ours{}?
We study the sensitivity of \ours{} to training supervision and graph construction choices. (Sec.~\ref{sec:analysis})

\vspace{-0.6em}
\subsection{Experimental Setup}
\label{sec:setup}

\noindent\textbf{Datasets.}
We evaluate \ours{} on seven datasets supporting three research scenarios.
Their statistics are summarized in Table~\ref{tab:dataset}.

\noindent$\bullet$ \emph{Pharma}, \emph{ML-Open}, and \emph{UK-Open} are sourced from CMDL~\cite{EltabakhKEA23}, a state-of-the-art cross-modal data discovery system that supports both NL-based and table-based discovery.
These datasets were curated by the authors of CMDL and are widely used as benchmarks for evaluating cross-modal data discovery across diverse domains.
\emph{Pharma} contains tables from the DrugBank database~\cite{drugbank} together with a corpus of PubMed abstracts referenced by DrugBank entries.
For example, rows in tables such as ``Drug'' and ``Enzyme'' include citations to related abstracts.
\emph{ML-Open} consists of machine learning datasets, including both tabular and textual data collected from open data portals such as Kaggle and OpenML.
\emph{UK-Open} contains UK government open data in CSV format, paired with synthetic textual descriptions generated by the authors of CMDL.

\noindent$\bullet$ \emph{GeoQuery}~\cite{PalYKR23}, \emph{Atis}~\cite{PalYKR23}, and \emph{MMQA}~\cite{wummqa} are standard benchmarks for table question answering (QA).
\emph{GeoQuery} and \emph{Atis} focus on domain-specific settings -- U.S. geography and flight booking, respectively -- and have been widely adopted in prior work to evaluate table QA methods~\cite{PalYKR23}.
In contrast, \emph{MMQA} is a large-scale benchmark designed for comprehensive evaluation of table QA across domains.
It spans 200 databases covering 138 domains, each containing multiple interrelated tables, and poses significant challenges for cross-domain reasoning and complex question understanding.

\noindent$\bullet$ \emph{TabFact}~\cite{ChenWCZWLZW20} is a widely used benchmark for table-based fact verification.
It consists of Wikipedia tables paired with crowd-sourced factual claims labeled as either supported or refuted.
Our experiments focus on simple claims that describe single-table rows, as complex claims are often artificially constructed and tend to reflect factual expressions less faithfully than straightforward claims~\cite{ChaiGZF021}.

\setlength{\textfloatsep}{0pt}
\begin{table}[t]
\setlength{\abovecaptionskip}{0cm}
\setlength{\belowcaptionskip}{0cm}
\footnotesize
\centering
\caption{Statistical properties for the datasets}
\label{tab:dataset}
\begin{tabular}{cccc}
\toprule
Datasets&\# NL statements& \# tables& Query modality\\
\midrule
Pharma&1380&82&NL statement, table\\
ML-Open&1501&158&NL statement, table\\
UK-Open&2360&654&NL statement\\
GeoQuery&832&7&NL statement\\
Atis&515&11&NL statement\\
MMQA&3313&553&NL statement\\
TabFact&49850&9181&NL statement\\
\bottomrule
\end{tabular}
\end{table}


\smallskip
\noindent\textbf{Evaluation Metrics.}
For binary retrieval, we evaluate effectiveness using \emph{precision}, \emph{recall}, and \emph{F1 score}.
For ranked retrieval, we report \emph{precision@$k$} and \emph{recall@$k$} in terms of effectiveness.
We evaluate efficiency in terms of \emph{training time}, \emph{training memory usage}, and \emph{average inference time per query}. 

\setlength{\textfloatsep}{0pt}
\begin{table*}[t]
\setlength{\tabcolsep}{2.5pt}
\setlength{\abovecaptionskip}{0cm}
\setlength{\belowcaptionskip}{0cm}
\footnotesize
\centering
\caption{Effectiveness of binary retrieval for NL-based data discovery (P represents precision, R represents recall)
}
\label{tab:NL2table}
\begin{tabular}{cccccccccccccccccccccc}
\toprule
\multirow{2}{*}{Method}&\multicolumn{3}{c}{Pharma}&\multicolumn{3}{c}{ML-Open}&\multicolumn{3}{c}{UK-Open}&\multicolumn{3}{c}{GeoQuery}&\multicolumn{3}{c}{Atis}&\multicolumn{3}{c}{MMQA}&\multicolumn{3}{c}{TabFact}\\
\cmidrule(lr){2-4} \cmidrule(lr){5-7} \cmidrule(lr){8-10} \cmidrule(lr){11-13} \cmidrule(lr){14-16} \cmidrule(lr){17-19} \cmidrule(lr){20-22}
&P&R&F1&P&R&F1&P&R&F1&P&R&F1&P&R&F1&P&R&F1&P&R&F1\\
\midrule
CMDL&0.82&0.996&0.683&0.905&0.537&0.674&0.507&0.852&0.636&0.501&0.802&0.616&0.499&0.898&0.642&0.463&0.255&0.329&0.428&0.163&0.236\\
\ours&\textbf{0.974}&\textbf{1}&\textbf{0.987}&\textbf{0.967}&\textbf{1}&\textbf{0.983}&\textbf{0.987}&\textbf{0.998}&\textbf{0.992}&\textbf{0.625}&\underline{0.878}&\textbf{0.73}&\textbf{0.772}&\underline{0.934}&\textbf{0.845}&\textbf{0.663}&\underline{0.772}&\textbf{0.713}&\underline{0.665}&\textbf{0.994}&\underline{0.797}\\
\ours$_\text{linear}$&0.958&1&0.978&0.94&1&0.969&0.978&0.989&0.983&0.615&0.862&0.718&0.705&0.802&0.751&0.623&0.708
&0.663&0.638&0.968
&0.769\\
\ours$_\text{noCL}$&\underline{0.968}&1&\underline{0.984}&\underline{0.956}&1&\underline{0.978}&\underline{0.984}&\underline{0.996}&\underline{0.99}&\underline{0.62}&0.87&\underline{0.724}&\underline{0.762}&0.897&\underline{0.824}&\underline{0.645}&0.756&\underline{0.696}&0.649&0.97
&0.778\\
\ours$_\text{noER}$&0.96&1&0.979&0.951&1&0.975&0.98&0.99&0.985&0.619&0.866
&0.722&0.756&0.891&0.818&0.637&0.757&0.692&0.657&0.962&0.781\\
Llama4-Scout&0.822&1&0.902&0.815&1&0.898&0.754&0.967&0.847&0.502
&\textbf{0.975}
&0.663&0.591
&\textbf{0.941}&0.726&0.439&\textbf{0.863}&0.582&0.518&\underline{0.985}&0.679\\
MTR-TableLlama&N/A&N/A&N/A&N/A&N/A&N/A&N/A&N/A&N/A&0.478
&0.827&0.606&0.528
&0.834&0.647&0.512&0.355&0.419&N/A&N/A&N/A\\
PASTA&N/A&N/A&N/A&N/A&N/A&N/A&N/A&N/A&N/A&N/A&N/A&N/A&N/A&N/A&N/A&N/A&N/A&N/A&\textbf{0.731}&0.899&\textbf{0.806}\\
\bottomrule
\end{tabular}\\
\vspace{-1em}
\end{table*}

\smallskip
\noindent\textbf{Implementation Details.}
By default, each dataset is split into training, validation, and test sets with a ratio of 20\% / 20\% / 60\%.
We set $K=10$ for top-$K$ Table-Table and NL statement-NL statement neighbor construction.
We initialize content features for both NL statements and tables using a pretrained BERT model~\cite{DevlinCLT19}.
The node embedding dimension is set to 128.
We train the model for 50 epochs using Adam with a learning rate of 0.0008.
Unless otherwise specified, we set $r=3$, $\tau=0.07$, $\lambda=0.5$, and $\beta=0.5$ for the cross-modal representation module.
For binary retrieval, the candidate size is set to 50\% of the total number of tables in each dataset.
All experiments are conducted on a server running Red Hat Enterprise Linux with a 2.60GHz Intel\textregistered\ Xeon\textregistered\ CPU, 512GB memory, and two NVIDIA Tesla P100 GPUs (16GB each).

\begin{figure*}[t]
\setlength{\abovecaptionskip}{0cm}
\setlength{\belowcaptionskip}{-0.5cm}
\centering
\subfigtopskip=0pt
\subfigbottomskip=2pt
\subfigcapskip=-6pt
\includegraphics[height=0.47cm]{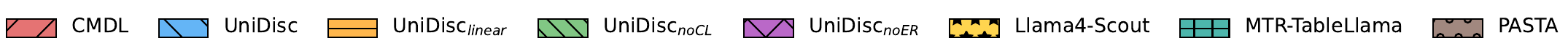}\\
\subfigure[Training time]{\label{subfig:training} \includegraphics[width=56mm]{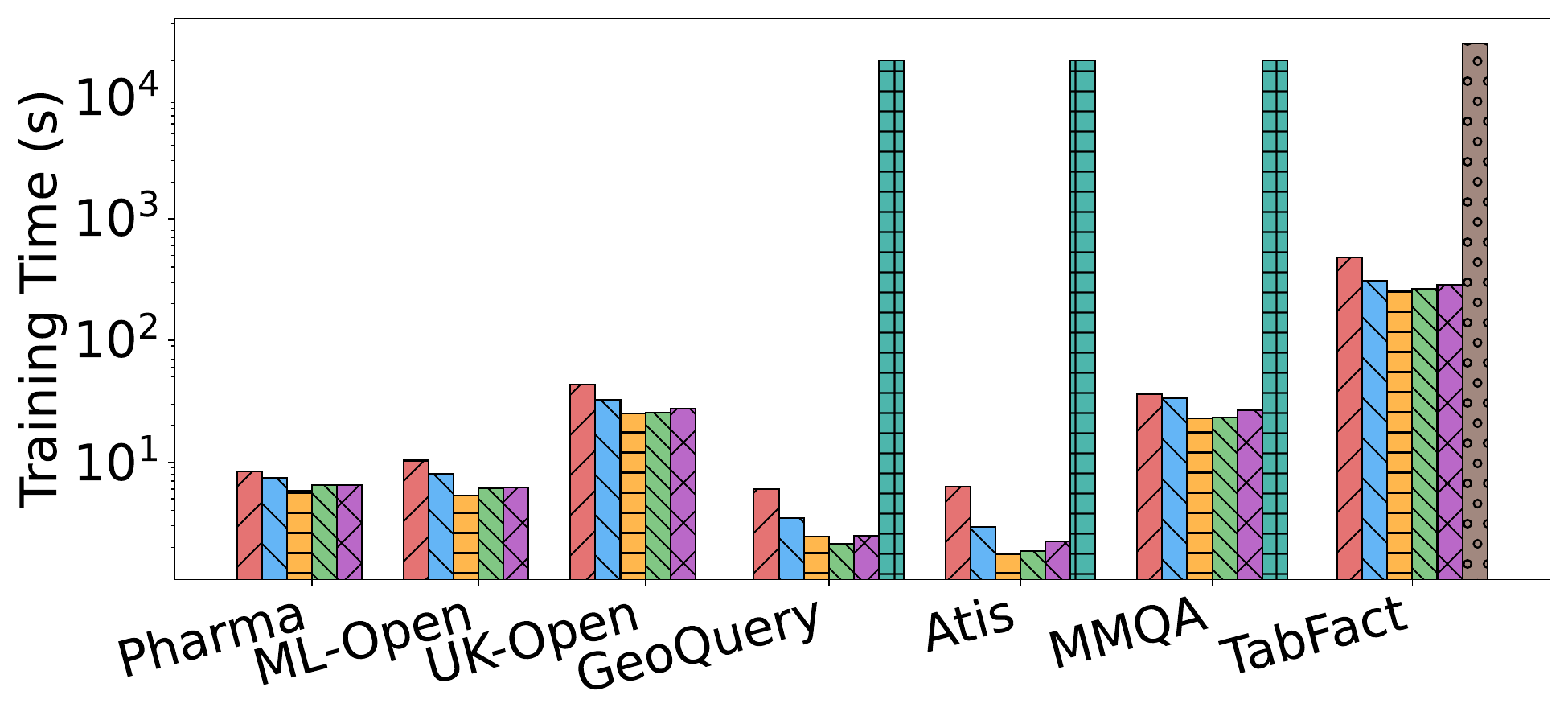}}
\subfigure[Training memory usage]{\label{subfig:memory} \includegraphics[width=56mm]{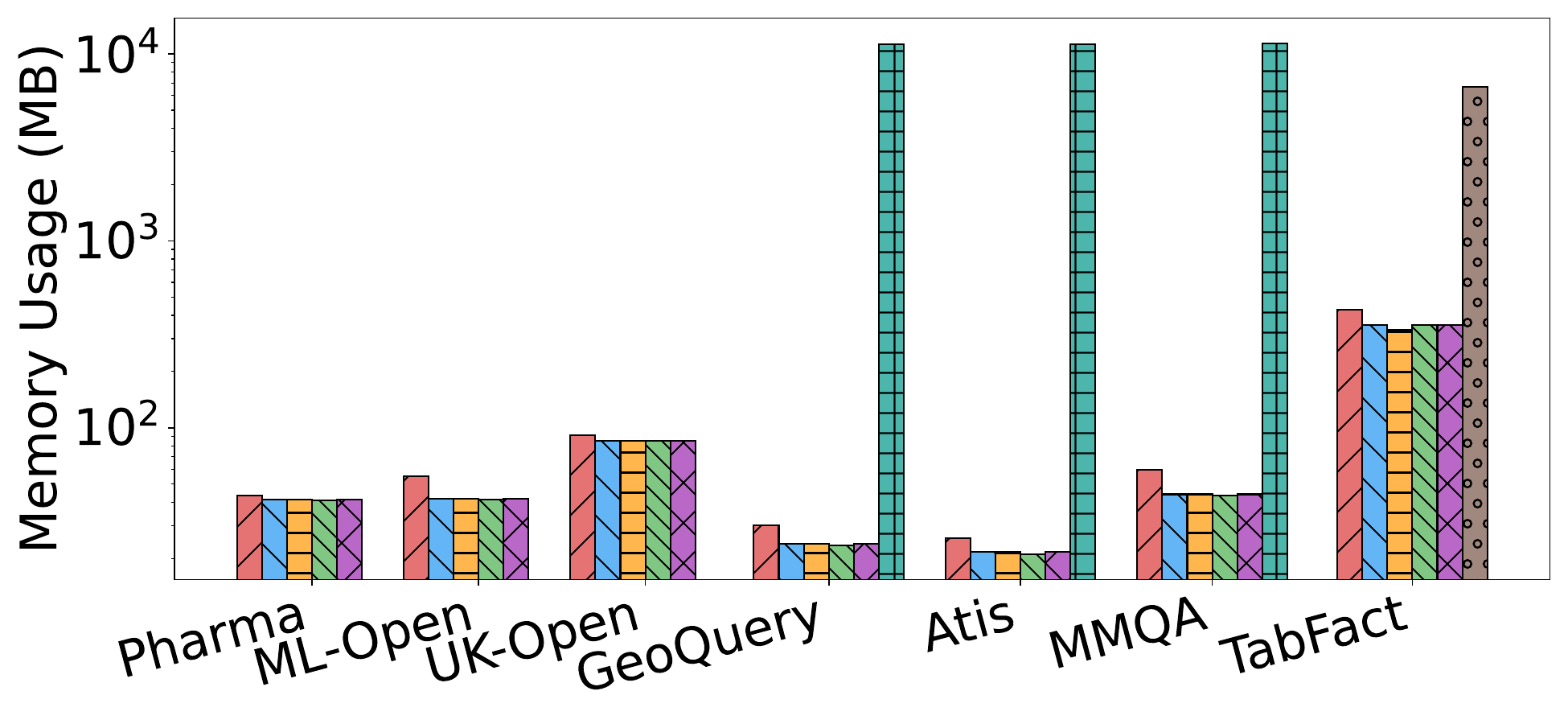}}
\subfigure[Average inference time per query]{\label{subfig:inference} \includegraphics[width=56mm]{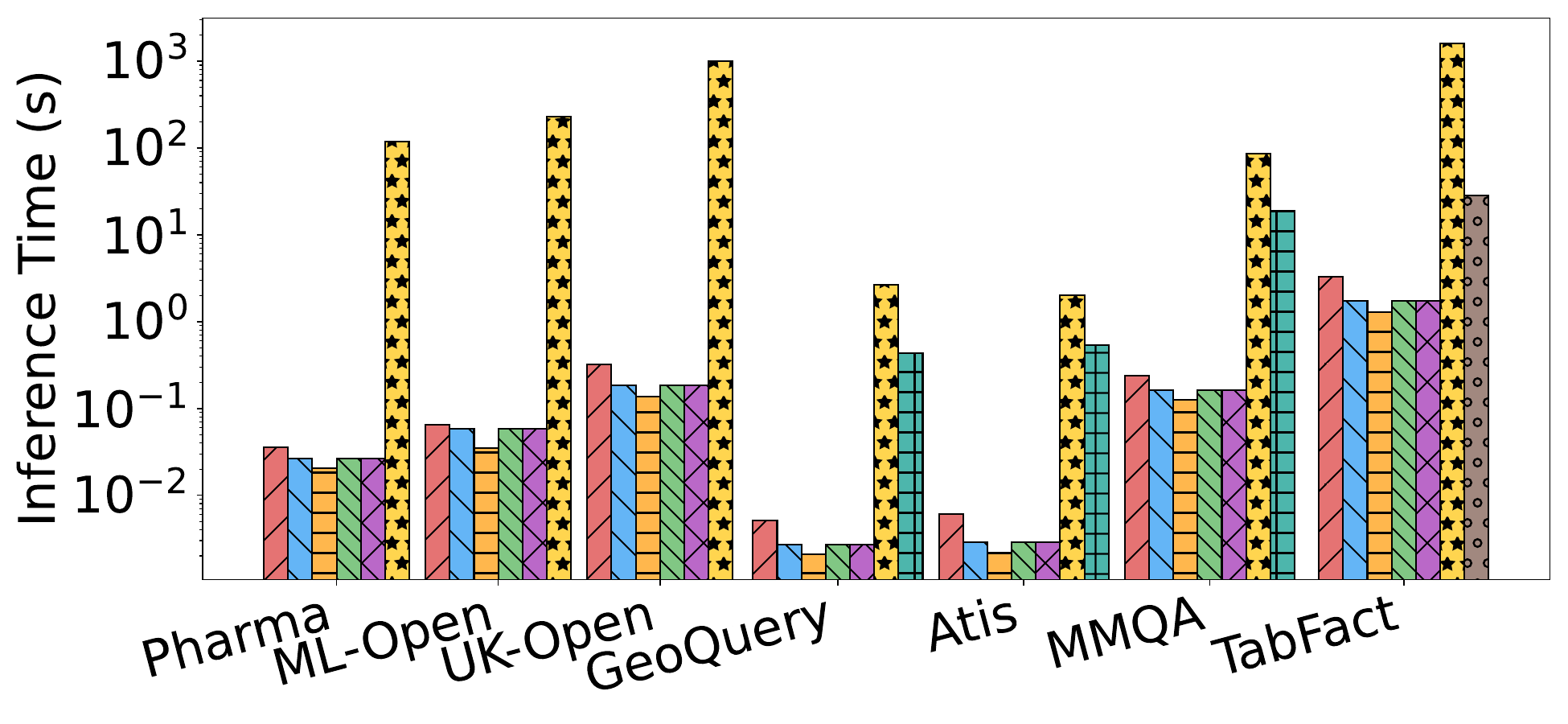}}
\caption{\label{fig:efficiency}{Efficiency of binary retrieval for NL-based data discovery.}
}
\end{figure*}

\vspace{-0.5em}
\subsection{NL-based Data Discovery (Q1)}
\label{sec:NL2table}

We compare \ours{} with the following baselines:

\noindent$\bullet$ \emph{Variants of \ours}. 
We consider three variants:
\ours$_\text{linear}$ replaces the Bi-LSTM node content encoder with a linear transformation;
\ours$_\text{noCL}$ removes the contrastive learning loss; and
\ours$_\text{noER}$ removes the edge reconstruction loss.

\noindent$\bullet$ \emph{CMDL}~\cite{EltabakhKEA23}. 
A state-of-the-art cross-modal data discovery system supporting both NL-based and table-based discovery.

\noindent$\bullet$ \emph{Llama4-Scout}~\cite{llama4}. 
A large language model with an industry-leading 10M-token context window.

\noindent$\bullet$ \emph{MTR-TableLlama}~\cite{wummqa}. 
A state-of-the-art method identifies relevant tables for multi-table QA.
It is the first to explicitly perform multi-table retrieval as part of an end-to-end QA pipeline.

\noindent$\bullet$ \emph{Pneuma}~\cite{balaka2025pneuma}. 
A state-of-the-art method for retrieving the most relevant table in single-table QA.

\noindent$\bullet$ \emph{PASTA}~\cite{GuF0NZ022}. 
A state-of-the-art method for table-based fact verification that predicts whether a table supports or refutes a claim.
We adapt it for relevance prediction by fine-tuning on TabFact with relabeled (claim, table) pairs.

\smallskip
\noindent
We compare \ours{} with CMDL and Llama4-Scout on all seven datasets, as they support general-purpose NL-based data discovery.
MTR-TableLlama and Pneuma are evaluated on the three table QA datasets (Atis, GeoQuery, and MMQA), as they are specifically designed for table QA.
PASTA is evaluated on the TabFact dataset, which focuses on table-based fact verification.

\subsubsection{Binary Retrieval}
\label{sec:set_retrieval}

We first evaluate the effectiveness and efficiency of \ours{} under binary retrieval.

\smallskip
\noindent\textbf{Task Implementation.}
Binary retrieval requires predicting binary relevance for each (query, table) pair.
Accordingly, we adapt applicable baselines to this setting.
The adapted \emph{PASTA} directly predicts 
binary relevance for (query, table) pairs.
For \emph{Llama4-Scout}, we adopt a prompting-based relevance prediction strategy following prior work on LLM-based retrieval~\cite{ArabzadehC25a}, where the model is prompted to determine whether a table is relevant to a query and outputs a binary label.
\emph{CMDL} is originally designed to produce column-level embeddings.
To enable table-level relevance prediction, we apply mean pooling over column embeddings to obtain table representations and use our binary retrieval method.
Since the code of \emph{MTR-TableLlama} is not publicly available, we re-implement it by following the pipeline suggested by the original authors~\cite{wummqa}.
Specifically, each query is decomposed into sub-questions using GPT-4-turbo~\cite{gpt4} with the prompting strategy in~\cite{wummqa}, augmented with examples from~\cite{abs-2402-11166}, and both tables and sub-questions are encoded by fine-tuning TableLlama-7B~\cite{ZhangYL024} with LoRA~\cite{HuSWALWWC22} on WikiTableQuestions~\cite{PasupatL15}. To support binary retrieval, we represent each query by averaging the embeddings of its sub-questions and apply our binary retrieval method.
We do not include \emph{Pneuma}, as it is specifically designed for top-$k$ retrieval by combining multiple relevance signals and does not support binary relevance prediction.

\smallskip
\noindent\textbf{Effectiveness}.
Table~\ref{tab:NL2table} summarizes effectiveness results. Observe that: 
(1) \ours{} and its variants consistently outperform CMDL across all datasets in precision, recall, and F1, with improvements up to 200\%.
Although Llama4-Scout achieves higher recall on some datasets, \ours{} yields better precision and F1, suggesting that they can generalize across domains and effectively model the relationship between NL statements and tables.
(2) On table QA datasets (Atis, GeoQuery, MMQA), \ours{} surpasses MTR-TableLlama by up to 70\%, demonstrating stronger retrieval of question-relevant tables.
(3) \ours{} is competitive with PASTA on TabFact, showing its ability to support fact verification.
(4) \ours{} outperforms \ours$_\text{linear}$, confirming the benefit of Bi-LSTM encoders for capturing contextual signals across columns.
(5) \ours{} outperforms \ours$_\text{noCL}$ and \ours$_\text{noER}$, validating that the joint optimization objective leads to more robust and discriminative node embeddings. 

\smallskip
\noindent\textbf{Efficiency}. We evaluate the efficiency of the compared methods across three key aspects of training time, training memory usage, and average inference time per query. Note that Llama4-Scout is used in a zero-shot setting via prompting and is not fine-tuned, and thus we report only its inference time.

\noindent$\bullet$ \emph{Training time}.
For PASTA, training time refers to the average per-epoch fine-tuning time. For MTR-TableLlama, it includes both the average per-epoch fine-tuning time and the additional MLP training time. For \ours{}, its variants, and CMDL, it includes the average per-epoch model training time plus MLP overhead. As shown in Fig.~\ref{subfig:training}:
(1) \ours{} and its variants consistently require less training time compared to other methods across all datasets.
Compared to the strongest baseline, CMDL, \ours{} achieves at least a 10\% reduction in training time and up to a 2$\times$ speedup in some cases, demonstrating high computational efficiency.
(2) \ours{} is three orders of magnitude faster than MTR-TableLlama and an order of magnitude faster than PASTA, reflecting the substantial overhead of large pretrained language models used in MTR-TableLlama and PASTA. 
(3) \ours$_{\text{linear}}$ is the fastest variant, as it replaces the Bi-LSTM with a linear encoder and thus avoids modeling inter-content dependencies and contextual information. Similarly, \ours$_{\text{noCL}}$ and \ours$_{\text{noER}}$ train faster than \ours{}, since optimizing a single loss is less computationally intensive than joint optimization.

\noindent$\bullet$ \emph{Training memory usage}.
As shown in Fig.~\ref{subfig:memory}, \ours{}, its variants, and CMDL consume substantially less memory than other methods, with CMDL requiring up to 30\% more memory than \ours{}. 
In contrast, MTR-TableLlaMa and PASTA incur significantly higher memory usage -- by three and one orders of magnitude, respectively -- due to the storage demands of their large language models.

\noindent$\bullet$ \emph{Inference time}. PASTA and Llama4-Scout evaluate each query against all tables to predict binary relevance, whereas other methods adopt our binary retrieval method.
For fairness, we report inference time as the amortized per-query cost, computed as the total time of (i) one-time table embedding generation and (ii) executing table relevance evaluation for all queries, divided by the number of queries.
As shown in Fig.~\ref{subfig:inference}:
(1) \ours{} and its variants consistently achieve the lowest inference time, with at least a 10\% reduction and up to a 2$\times$ speedup over the strongest baseline, CMDL. 
(2) Llama4-Scout exhibits the highest inference time, being at least three orders of magnitude slower due to its 17B-parameter model and long input prompts.
(3) MTR-TableLlaMa and PASTA also incur substantially higher inference costs -- approximately two and one orders of magnitude more, respectively -- owing to their more complex model architectures for embedding generation.
(4) \ours{}, \ours$_\text{noCL}$, and \ours$_\text{noER}$ have identical inference costs, as they share the same model architecture.

\smallskip
\noindent\textbf{Discussion}. In summary, \ours{} and its variants consistently outperform competing methods in both effectiveness and efficiency.
\ours{} achieves the best overall effectiveness, while \ours$_\text{linear}$ is the most efficient, enabling flexible trade-offs between performance and computational cost.
Although Llama4-Scout and PASTA attain slightly higher effectiveness in some cases, \ours{} offers a more favorable balance between effectiveness and efficiency.

Across datasets, Pharma is relatively easier due to stronger lexical overlap and explicit links between NL statements and tables (e.g., DrugBank citations), resulting in smaller performance gaps.
In contrast, MMQA and TabFact are more challenging due to weaker lexical alignment, requiring models to leverage contextual and structural signals.
Despite this increased difficulty,  \ours{} remains the top-performing model by effectively exploiting graph structure and multi-type neighborhood information.


\begin{figure}[t]
\setlength{\abovecaptionskip}{0cm}
\setlength{\belowcaptionskip}{0cm}
\centering
\subfigtopskip=0pt
\subfigbottomskip=2pt
\subfigcapskip=-4pt
\includegraphics[height=0.5cm]{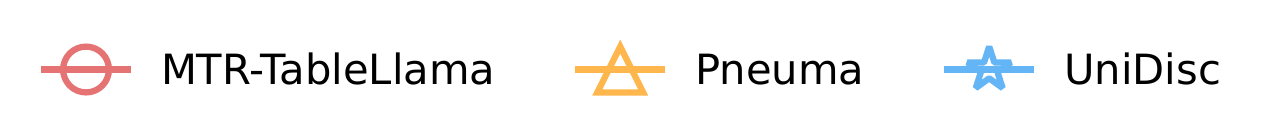}\\
\subfigure[GeoQuery]{\label{subfig:geoquery} \includegraphics[width=26mm, height=20mm]{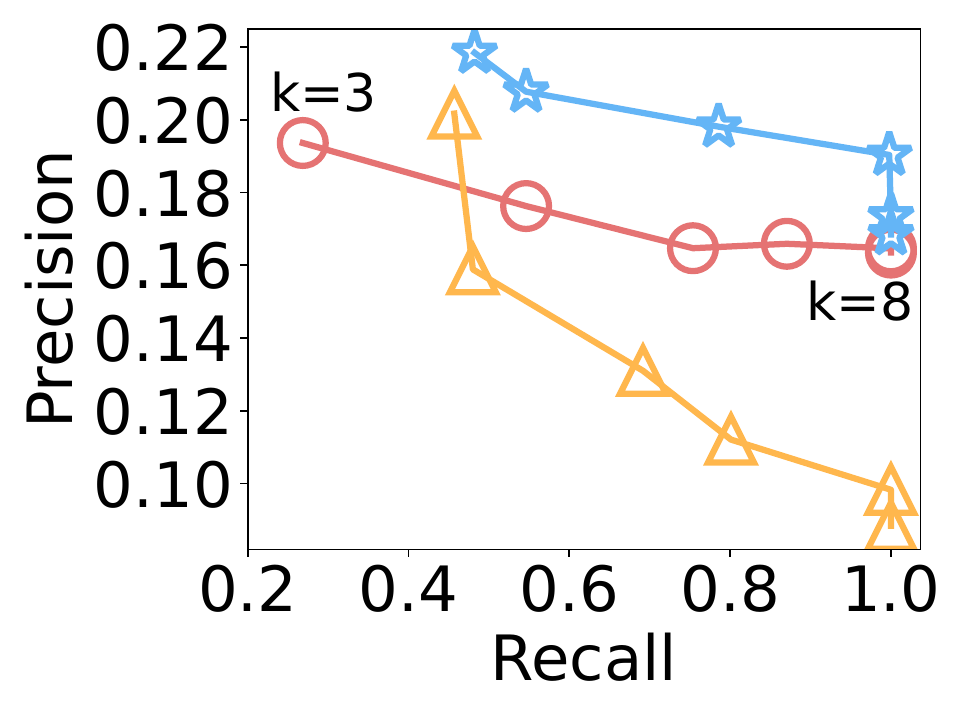}}
\subfigure[Atis]{\label{subfig:atis} \includegraphics[width=26mm, height=20mm]{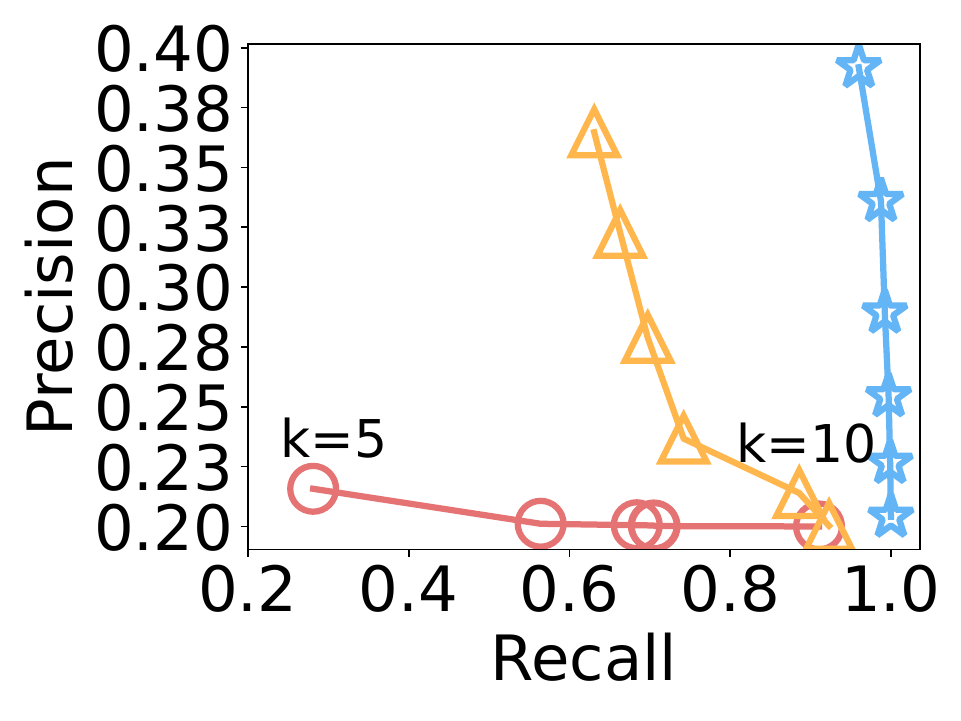}}
\subfigure[MMQA]{\label{subfig:MMQA} \includegraphics[width=27mm, height=20mm]{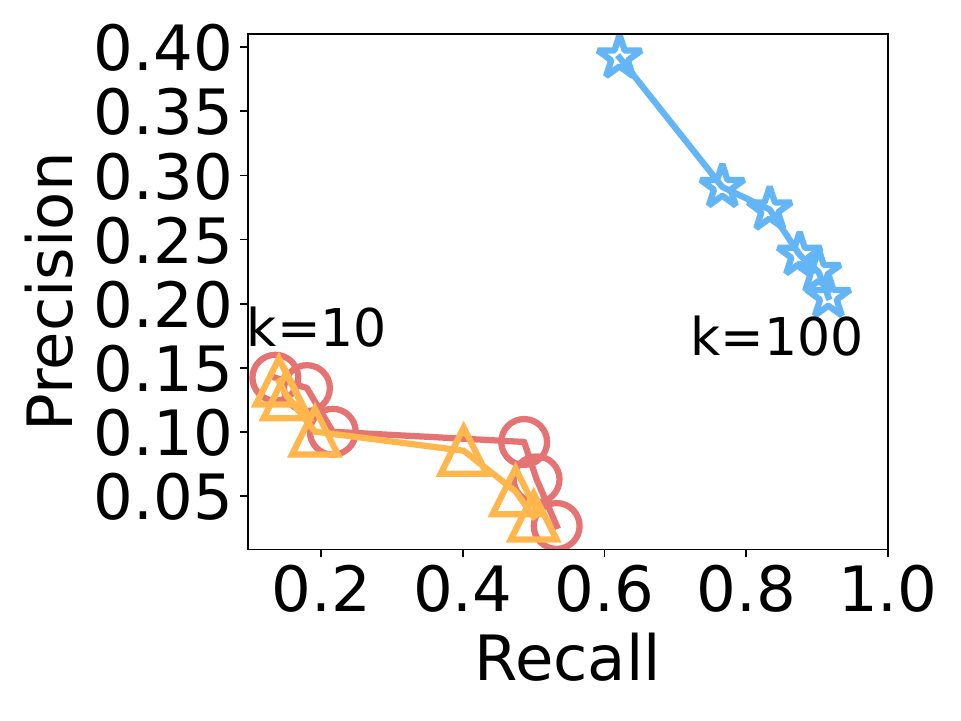}}\\
\subfigure[GeoQuery]{\label{subfig:tgeoquery} \includegraphics[width=26mm, height=20mm]{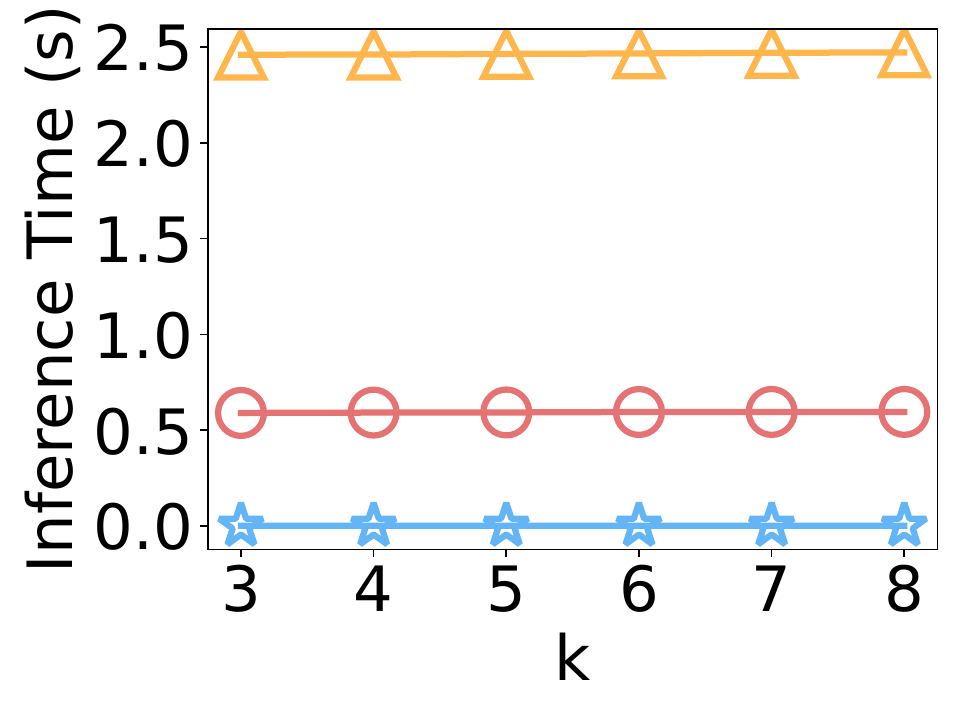}}
\subfigure[Atis]{\label{subfig:tatis} \includegraphics[width=26mm, height=20mm]{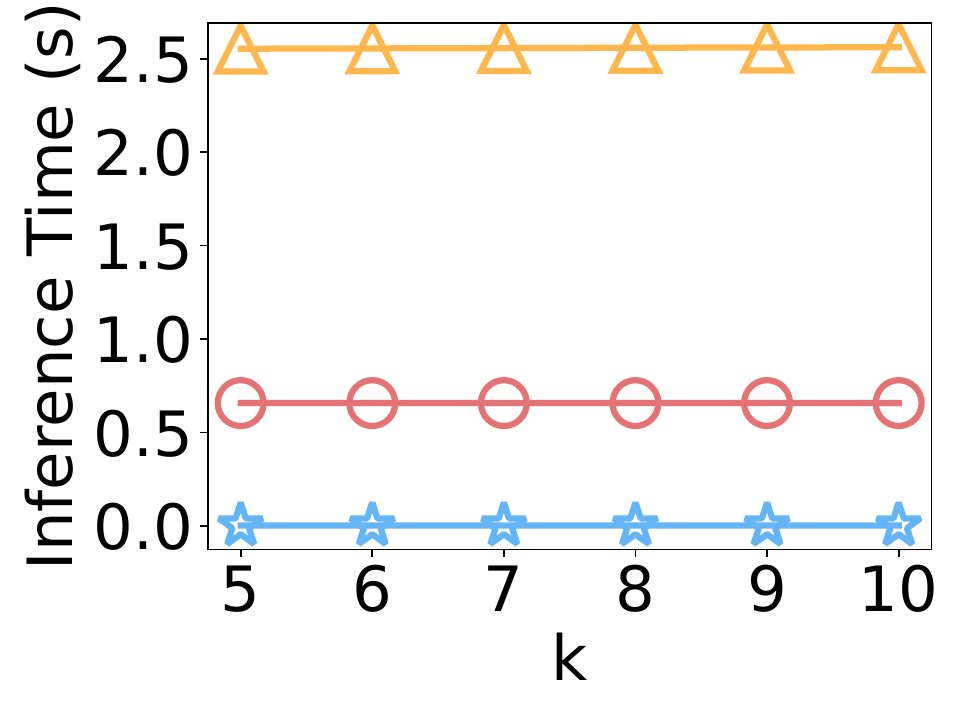}}
\subfigure[MMQA]{\label{subfig:tMMQA} \includegraphics[width=27mm, height=20mm]{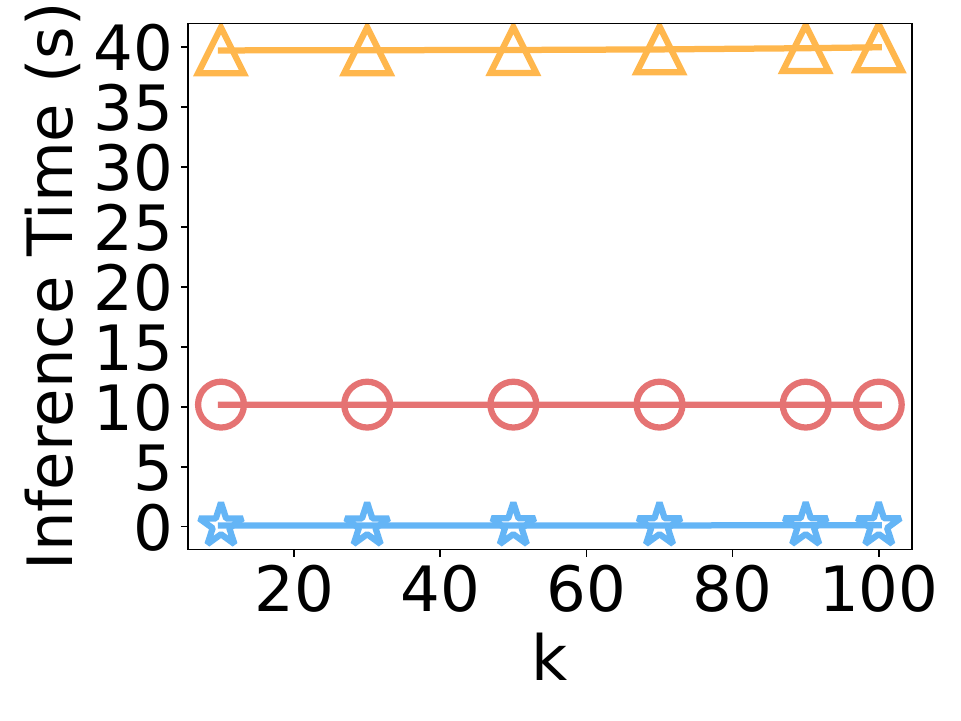}}
\caption{\label{fig:QA}
{Evaluation of ranked retrieval for NL-based data discovery. 
}
}
\end{figure}

\subsubsection{Ranked Retrieval}
\label{sec:topk_retrieval}

In this section, we evaluate the effectiveness and efficiency of \ours{} under ranked retrieval.

\smallskip
\noindent\textbf{Task Implementation.}
We consider three table QA datasets -- Atis, GeoQuery, and MMQA, as table question answering is a representative downstream application for ranked retrieval in prior work~\cite{WangF23,balaka2025pneuma,wummqa}. Accordingly, we compare \ours{} against two state-of-the-art table retrieval methods for table QA: MTR-TableLlama and Pneuma.
{Pneuma} follows its original pipeline and retrieves the top-$k$ most relevant tables for single-table QA, outperforming earlier methods such as Solo~\cite{WangF23}.
{MTR-TableLlama} performs multi-table retrieval by decomposing each input question into multiple sub-questions, encoding them using a fine-tuned TableLlama model (as described in Sec.~\ref{sec:set_retrieval}), and retrieving the top-$k$ most relevant tables via $n$ independent retrieval rounds, one for each sub-question.

The range of $k$ in ranked retrieval varies across datasets due to differences in the number of available tables.
As recall approaches one when nearly all tables are retrieved, excessively large $k$ values become uninformative.
Accordingly, we set the maximum $k$ to 10 for Atis and 8 for GeoQuery, which are close to their respective table counts.
For MMQA, which contains 553 tables, \ours{} reaches full recall at $k=100$; we therefore set the maximum $k$ to 100.

\smallskip
\noindent\textbf{Effectiveness.}
As shown in Figs.~\ref{subfig:geoquery}–\ref{subfig:MMQA}, \ours{} consistently outperforms MTR-TableLlaMa and Pneuma in both precision and recall under ranked retrieval.
In particular, \ours{} reaches full recall at smaller values of $k$.
This indicates that \ours{} more effectively identifies question-relevant tables when only a limited number of tables are returned, thereby increasing the likelihood of accurate answer generation in table QA.

\smallskip
\noindent\textbf{Efficiency.}
Since Pneuma does not involve model training and directly invokes LLMs for table representation, we focus on inference time.
As these methods adopt different embedding generation strategies, inference time is reported as the amortized per-query cost, computed by dividing the sum of the one-time table embedding generation and the total retrieval time by the number of queries.
As illustrated in Figs.~\ref{subfig:tgeoquery}–\ref{subfig:tMMQA}, \ours{} achieves up to two orders of magnitude speedup over MTR-TableLlaMa and up to three orders of magnitude speedup over Pneuma, with increasing $k$ having a negligible impact on inference time.
This is because inference is dominated by embedding generation rather than retrieval.
\ours{} generates embeddings substantially more efficiently than MTR-TableLlaMa and Pneuma, both of which rely on large foundation models.
For example, on MMQA, \ours{} requires about 100ms for embedding generation and only a few milliseconds for retrieval, whereas MTR-TableLlaMa and Pneuma require approximately 10s and 39s for embedding generation, respectively; table retrieval for both methods takes only milliseconds to sub-second time.



\setlength{\textfloatsep}{0pt}
\begin{table}[t]
\setlength{\tabcolsep}{3pt}
\setlength{\abovecaptionskip}{0cm}
\setlength{\belowcaptionskip}{0cm}
\footnotesize
\centering
\caption{Effectiveness of binary retrieval for table-based data discovery 
}
\label{tab:table2table}
\begin{tabular}{ccccccc}
\toprule
\multirow{2}{*}{Method}&\multicolumn{3}{c}{Pharma}&\multicolumn{3}{c}{ML-Open}\\
\cmidrule(lr){2-4} \cmidrule(lr){5-7}
&P&R&F1&P&R&F1\\
\midrule
CMDL&0.815&0.856&0.835&0.865&0.832&0.848\\
\ours&\textbf{0.956}&\textbf{1}&\textbf{0.977}&\textbf{0.965}&\textbf{1}&\textbf{0.982}\\
\ours$_\text{linear}$&0.939&1&0.968&0.942&1&0.97
\\
\ours$_\text{noCL}$&0.948&1&0.973&0.952&1&0.975\\
\ours$_\text{noER}$&\underline{0.95}&1&\underline{0.974}&\underline{0.958}&1&\underline{0.979}\\
\bottomrule
\end{tabular}
\vspace{-0.4cm}
\end{table}

\vspace{-0.3cm}
\subsection{Table-based Data Discovery (Q2)}
\label{sec:table2table}

We compare \ours{} with CMDL~\cite{EltabakhKEA23} on Pharma and ML-Open, as CMDL supports table-based discovery and provides ground-truth (table, table) relevance labels for these two datasets, enabling a fair comparison.

\subsubsection{Binary Retrieval}

We adapt CMDL to this setting by following the binary retrieval setup used for NL-based discovery in Sec.~\ref{sec:set_retrieval}.

\begin{figure}[t]
\setlength{\abovecaptionskip}{0cm}
\setlength{\belowcaptionskip}{-0.5cm}
\centering
\includegraphics[height=0.43cm]{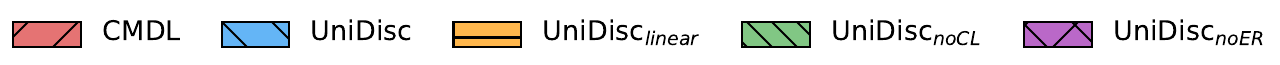}\\
\includegraphics[width=55mm]{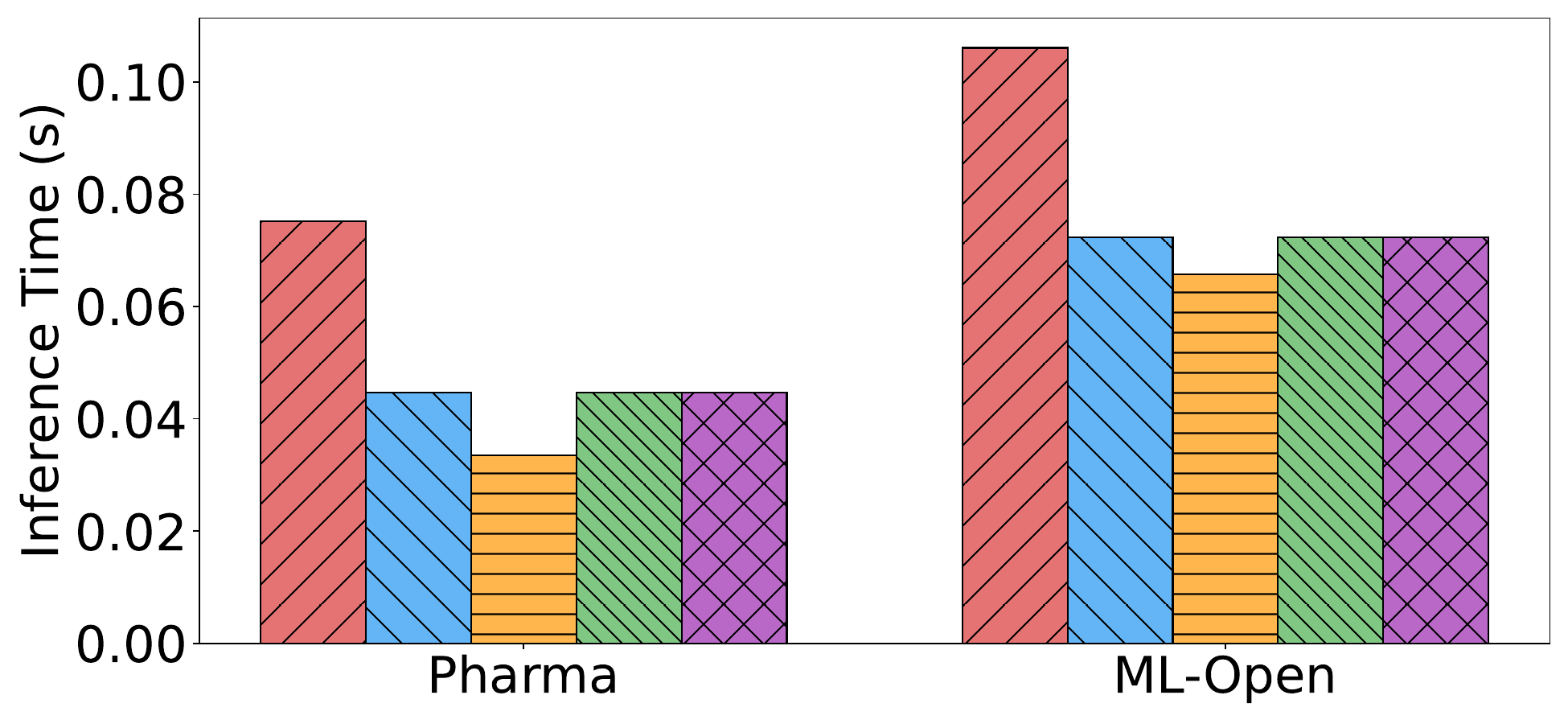}
\caption{\label{fig:t2t}{Average inference time per query of binary retrieval for table-based data discovery.}
}
\end{figure}

\begin{figure}[t]
\setlength{\abovecaptionskip}{0cm}
\setlength{\belowcaptionskip}{0cm}
\centering
\subfigtopskip=0pt
\subfigbottomskip=0pt
\subfigcapskip=-6pt
\includegraphics[height=0.5cm]{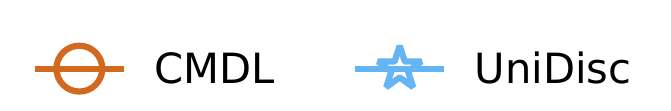}\\
\subfigure[Pharma]{\label{subfig:pharma} \includegraphics[width=32mm, height=22mm]{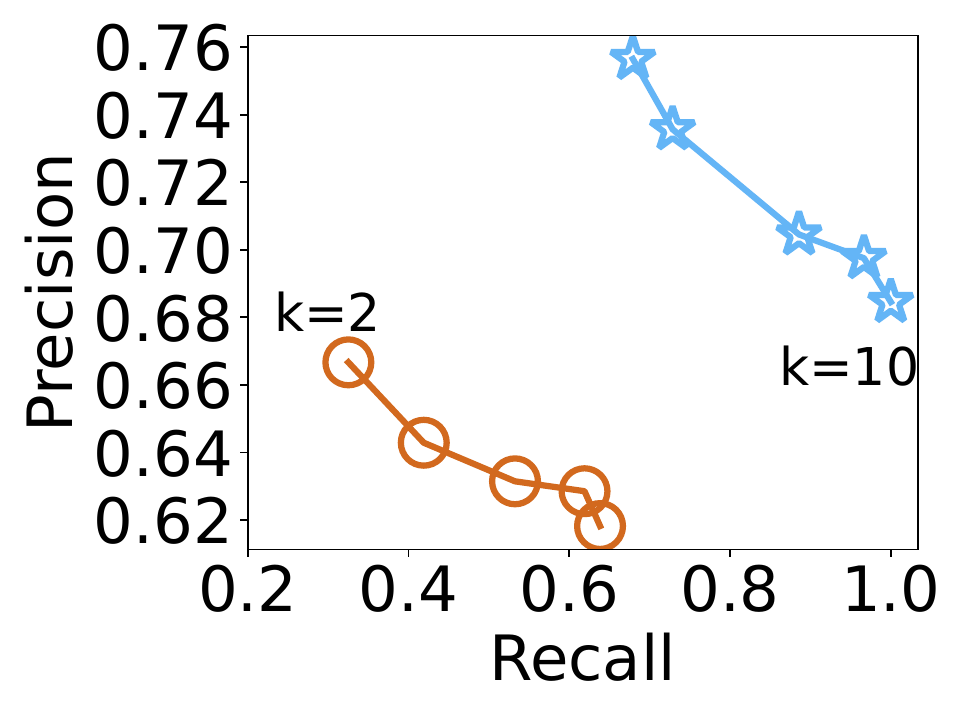}}
\subfigure[ML-Open]{\label{subfig:mlopen} \includegraphics[width=32mm, height=22mm]{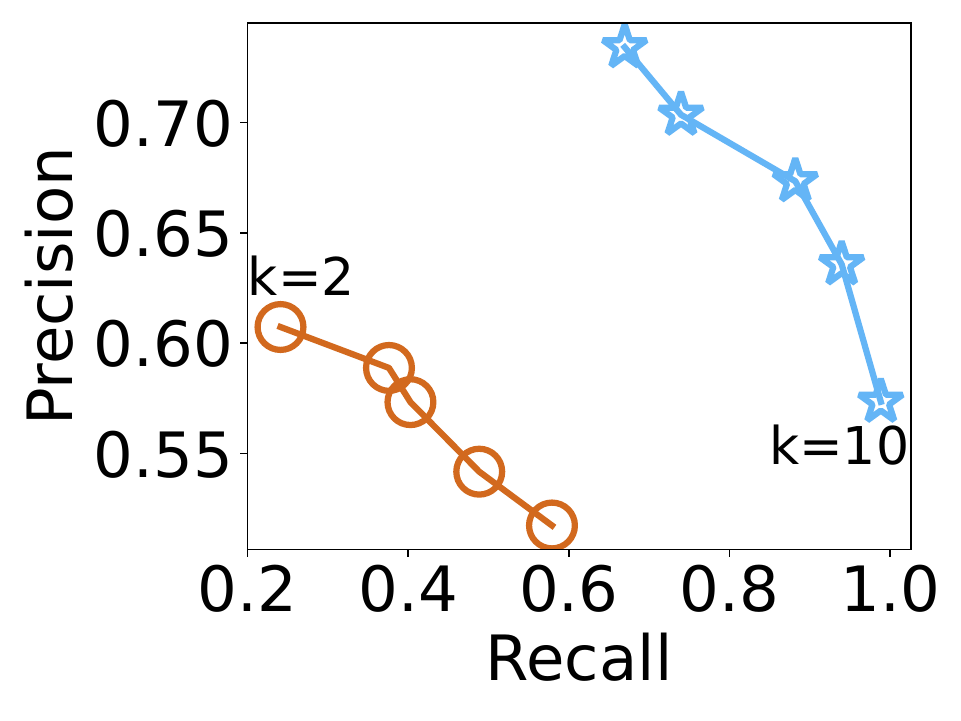}}\\
\subfigure[Pharma]{\label{subfig:tpharma} \includegraphics[width=32mm, height=22mm]{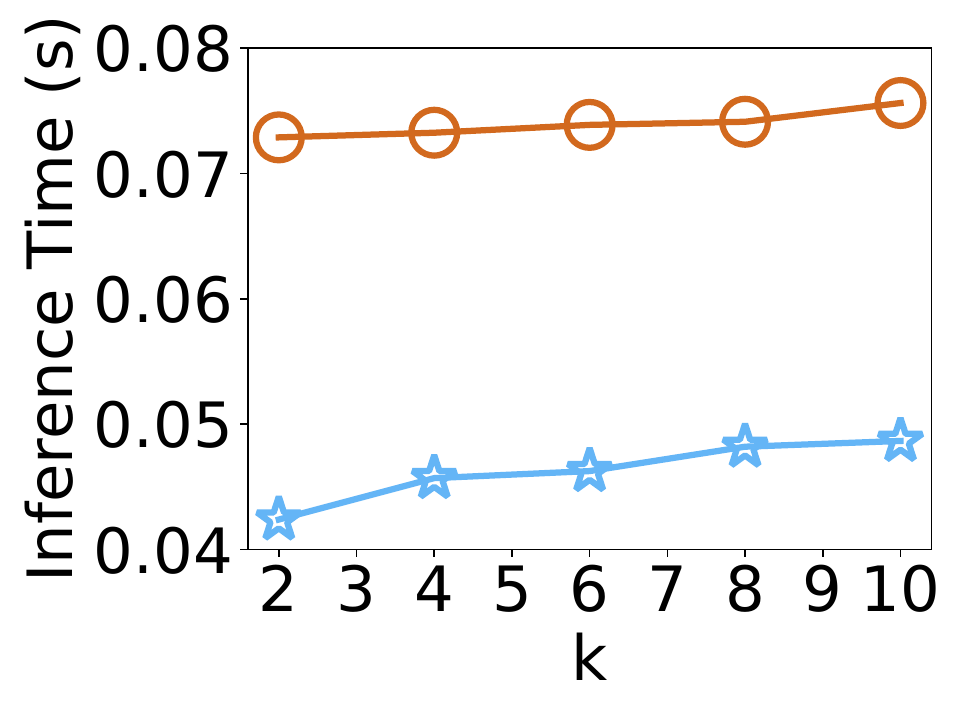}}
\subfigure[ML-Open]{\label{subfig:tmlopen} \includegraphics[width=32mm, height=22mm]{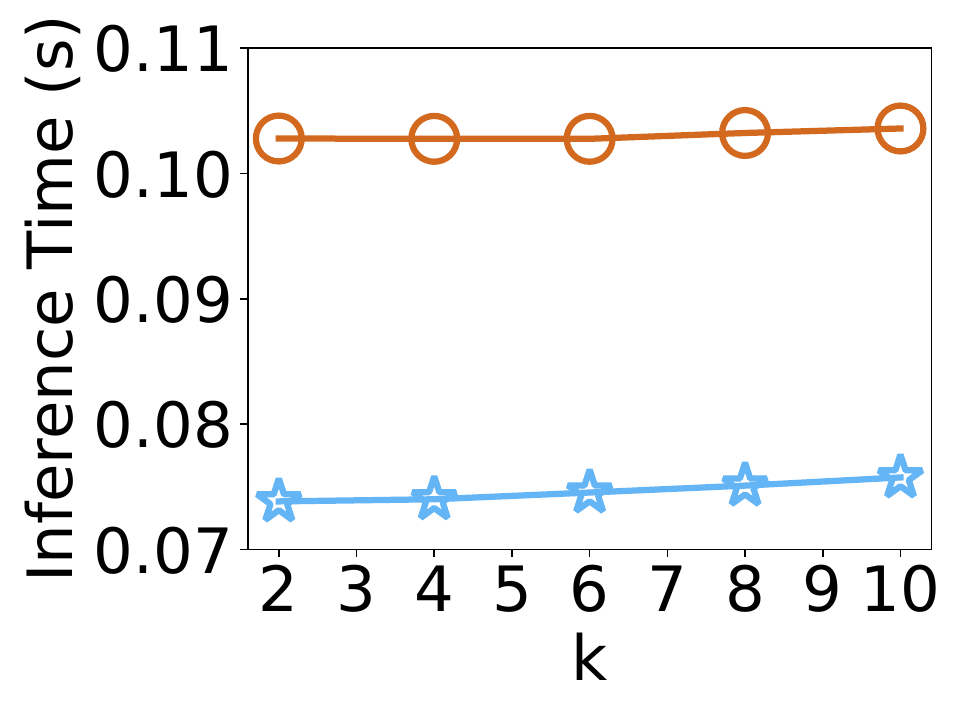}}
\caption{\label{fig:topk_table}
{Evaluation of ranked retrieval for table-based data discovery. 
}
}
\end{figure}

\smallskip
\noindent\textbf{Effectiveness.}
The effectiveness results are reported in Table~\ref{tab:table2table}.
\ours{} and its variants consistently outperform CMDL on both datasets, achieving up to a 17\% improvement.
This demonstrates \ours{}'s ability to learn discriminative table representations that bring semantically related tables closer while separating irrelevant ones in the embedding space.
\ours{} also achieves better performance than all variants -- \ours$_{\text{linear}}$, \ours$_{\text{noCL}}$, and \ours$_{\text{noER}}$ -- highlighting the benefits of using a Bi-LSTM encoder for table content and jointly optimizing the contrastive learning and edge reconstruction objectives.
Together, these design choices contribute to more informative and robust table representations.

\smallskip
\noindent\textbf{Efficiency.}
Since training time and memory usage have been reported in Figs.~\ref{subfig:training} and~\ref{subfig:memory}, we focus on average inference time per query, as shown in Fig.~\ref{fig:t2t}.
Consistent with the NL-based data discovery setting, \ours{} and its variants achieve at least 30\% lower inference latency than CMDL across both datasets.

\subsubsection{Ranked Retrieval}
To support ranked retrieval with CMDL, we follow its original design by computing relevance scores between the query embedding and individual column embeddings of each candidate table, and aggregating these scores to rank tables.

\smallskip
\noindent\textbf{Results.}
As shown in Figs.~\ref{subfig:pharma}–\ref{subfig:mlopen}, \ours{} consistently achieves higher precision and recall than CMDL across different values of $k$, and reaches full recall at $k=10$, demonstrating its effectiveness for table-based data discovery under ranked retrieval.
These results indicate that even when users request only a limited number of tables, \ours{} is more likely to retrieve the relevant ones.

In terms of efficiency, Figs.~\ref{subfig:tpharma}–\ref{subfig:tmlopen} further show that \ours{} incurs lower inference latency than CMDL, achieving up to a 75\% speedup, which further highlights the efficiency of our framework.

\setlength{\textfloatsep}{0pt}
\begin{table}[t]
\setlength{\tabcolsep}{3pt}
\setlength{\abovecaptionskip}{0cm}
\setlength{\belowcaptionskip}{0cm}
\footnotesize
\centering
\caption{End-to-end evalution for table-based fact verification 
}
\label{tab:TFV}
\begin{tabular}{ccccc}
\toprule
Methods&Precision&Recall&F1&Total inference time\\
\midrule
OpenTFV&0.618&0.687&0.651&2261s
\\
\ours{}&\textbf{0.623}&\textbf{0.688}&\textbf{0.654}&\textbf{13s}\\
\bottomrule
\end{tabular}
\end{table}

\subsection{Downstream Application (Q3)}
\label{sec:app}

\ours{} is designed as a fundamental building block for downstream tasks that rely on effective data discovery.
In this section, we use table-based fact verification as a representative application to illustrate how \ours{} supports high-quality table retrieval.
Table question answering is another representative application, and we have already demonstrated the strong performance of \ours{} in retrieving question-relevant tables in Sec.~\ref{sec:topk_retrieval}.




In table-based fact verification, the goal is to determine whether a given factual claim is supported or refuted by a table.
Most prior work focuses on the verification step, assuming that the relevant table is already available~\cite{ChenWCZWLZW20,ChaiGZF021,GuF0NZ022}.
In contrast, \ours{} retrieves claim-relevant tables, which can be paired with existing verification models.
OpenTFV~\cite{OpenTFV22} adopts a similar two-stage approach: keyword matching retrieves the top-$k$ tables ($k=10$ in our setting) for high recall, followed by semantic re-ranking and final verification using models like LPA~\cite{ChenWCZWLZW20}.

We compare \ours{} with OpenTFV on TabFact using this retrieval-and-verification pipeline, where both methods employ the state-of-the-art PASTA model~\cite{GuF0NZ022} for fact verification to ensure a fair comparison.
As shown in Table~\ref{tab:TFV}, \ours{} achieves superior end-to-end verification performance compared to OpenTFV.
This improvement stems from more effective retrieval of relevant evidence tables, which directly enhances verification accuracy.
In addition, \ours{} achieves substantially lower total inference time across all queries, with up to two orders of magnitude speedup, as OpenTFV relies on TAPAS~\cite{HerzigNMPE20}, a large pre-trained model for embedding generation that incurs significant computational overhead.

\setlength{\textfloatsep}{0pt}
\begin{table}[t]
\setlength{\tabcolsep}{3pt}
\setlength{\abovecaptionskip}{0cm}
\setlength{\belowcaptionskip}{0cm}
\footnotesize
\centering
\caption{Impact of training set size (``Size'' indicates the percentage of the entire dataset used for training)}
\label{tab:size}
\begin{tabular}{ccccccccccc}
\toprule
\multirow{2}{*}{Dataset} & \multirow{2}{*}{Size} & \multicolumn{3}{c}{\ours} & \multicolumn{3}{c}{CMDL} & \multicolumn{3}{c}{PASTA} \\
\cmidrule(lr){3-5} \cmidrule(lr){6-8} \cmidrule(lr){9-11}
&&P&R&F1&P&R&F1&P&R&F1\\
\midrule
\multirow{4}{*}{UK-Open}&1\%&0.968&0.991&0.979&0.457&0.785&0.577&N/A&N/A&N/A\\
&5\%&0.979&0.994&0.986&0.466&0.799&0.588&N/A&N/A&N/A\\
&10\%&0.982&0.997&0.989&0.487&0.829&0.613&N/A&N/A&N/A\\
&20\%&0.987&0.998&0.992&0.507&0.852&0.636&N/A&N/A&N/A\\
\hline
\multirow{4}{*}{MMQA}&1\%&0.65&0.753&0.698&0.367&0.201&0.26&N/A&N/A&N/A\\
&5\%&0.653&0.761&0.703&0.422&0.213&0.283&N/A&N/A&N/A\\
&10\%&0.661&0.768&0.71&0.446&0.241&0.313&N/A&N/A&N/A\\
&20\%&0.663&0.772&0.713&0.463&0.255&0.329&N/A&N/A&N/A\\
\hline
\multirow{4}{*}{TabFact}&1\%&0.655&0.965&0.78&0.351&0.122&0.181&0.675&0.879&0.764\\
&5\%&0.659&0.979&0.788&0.374&0.131&0.194&0.699&0.881&0.779\\
&10\%&0.661&0.989&0.792&0.408&0.158&0.228&0.718&0.891&0.795\\
&20\%&0.665&0.994&0.797&0.428&0.163&0.236&0.731&0.899&0.806\\
\bottomrule
\end{tabular}
\vspace{-0.4cm}
\end{table}

\subsection{Sensitivity Analysis (Q4)}
\label{sec:analysis}

In this section, we analyze the sensitivity of \ours{} with respect to the training set size and the number of neighbor links ($K$) used in graph construction.

\smallskip
\noindent\textbf{Impact of training set size.}
As discussed in Sec.~\ref{sec:intro}, \ours{} is designed to achieve strong performance under limited supervision.
To evaluate this property, we examine how its performance varies with different training set sizes and compare it against relevant baselines under the same training conditions.

We select three datasets -- UK-Open, MMQA, and TabFact -- that involve diverse types of NL statements and were among the more challenging cases in our earlier experiments (Sec.~\ref{sec:NL2table}).
We focus on comparing \ours{} with CMDL across three datasets, as CMDL is trained using the same training splits and thus allows for a fair comparison under varying training sizes.
Although Llama4-Scout and MTR-TableLlama were included in earlier experiments on MMQA, we exclude them because they are trained or fine-tuned in prior work and are not adapted to our training data, making them unsuitable for this evaluation.
For TabFact, we additionally include PASTA, which is fine-tuned on our training data and serves as a strong baseline for table-based fact verification.

As shown in Table~\ref{tab:size}, \ours{} consistently outperforms CMDL across all three datasets in terms of precision, recall, and F1 score, even when trained on smaller subsets of the data.
On TabFact, \ours{} achieves competitive performance compared to PASTA.
While PASTA attains higher precision, \ours{} consistently yields substantially higher recall.
Notably, when trained on only 1\% or 5\% of the full dataset, \ours{} surpasses PASTA in F1 score, highlighting its advantage under limited supervision.
Overall, these results demonstrate the robustness of \ours{} in low-resource settings, where only a small number of labeled examples are available.

\setlength{\textfloatsep}{0pt}
\begin{table}[t]
\setlength{\tabcolsep}{3pt}
\setlength{\abovecaptionskip}{0cm}
\setlength{\belowcaptionskip}{0cm}
\footnotesize
\centering
\caption{Impact of the number of neighbor links $K$}
\label{tab:K}
\begin{tabular}{cccccccccc}
\toprule
\multirow{2}{*}{K} & \multicolumn{3}{c}{UK-Open} & \multicolumn{3}{c}{MMQA} & \multicolumn{3}{c}{TabFact} \\
\cmidrule(lr){2-4} \cmidrule(lr){5-7} \cmidrule(lr){8-10}
&P&R&F1&P&R&F1&P&R&F1\\
\midrule
1  & 0.984 & 0.996 & 0.990 & 0.660 & 0.767 & 0.710 & 0.661 & 0.992 & 0.794 \\
5  & 0.986 & 0.997 & 0.992 & 0.661 & 0.770 & 0.712 & 0.663 & 0.993 & 0.795 \\
10 & 0.987 & 0.998 & 0.992 & 0.663 & 0.772 & 0.713 & 0.665 & 0.994 & 0.797 \\
15 & 0.988 & 0.999 & 0.994 & 0.665 & 0.773 & 0.715 & 0.667 & 0.995 & 0.798 \\
20 & 0.989 & 0.999 & 0.994 & 0.666 & 0.775 & 0.716 & 0.668 & 0.996 & 0.799 \\
\bottomrule
\end{tabular}
\end{table}

\smallskip
\noindent\textbf{Impact of the number of neighbor links $K$}. We evaluate the sensitivity of \ours{} to the number of neighbor links $K$ using three datasets -- UK-Open, MMQA, and TabFact.
As shown in Table~\ref{tab:K}, increasing $K$ leads to modest improvements in precision, recall, and F1 score.
However, the overall impact is limited, and \ours{} maintains strong performance even when $K=1$.
This result suggests that while richer contextual connectivity can benefit data discovery, \ours{} remains robust to sparse graph construction.

\section{Related Work}
\label{sec:relatedWork}

\noindent\textbf{Data discovery in data lakes}. A substantial body of work has focused on data discovery in data lakes, primarily using either an exemplar table~\cite{ZhuDNM19,NargesianZPM18,Dong0NEO23,BogatuFP020,DongT0O21,FernandezAKYMS18} or keywords~\cite{ChenTHX020,BrickleyBN19,ZhangB18,ShragaRFC20} as input. They typically support a single query modality and assume a specific user intent, leveraging similarity signals such as value overlap~\cite{ZhuDNM19}, schema similarity~\cite{NargesianZPM18}, or distances between learned embeddings~\cite{DongT0O21,Dong0NEO23}. Systems such as Aurum~\cite{FernandezAKYMS18} and D3L~\cite{BogatuFP020} further combine multiple signals to improve retrieval accuracy.

More recent studies~\cite{EltabakhKEA23,wang2025towards,FengRFCK24,SilvaB23,LeesBKS0021} explore using unstructured text as queries, but they are still designed around specific user intents (e.g., open-ended text enrichment~\cite{SilvaB23,LeesBKS0021}). Among these, CMDL~\cite{EltabakhKEA23} is most closely related to our work, as it supports both table- and NL-based queries. However, CMDL assumes a targeted intent for each query type and does not generalize to diverse intents. In contrast, \ours{} accommodates diverse intents within a unified framework and consistently outperforms CMDL on both NL- and table-based discovery.

\smallskip
\noindent\textbf{Table question answering}. Table retrieval for question answering (QA) is a prominent user intent in data discovery, where the goal is to retrieve one or multiple tables required to answer a query. However, most existing work on table QA assumes that the relevant tables are already provided and focuses primarily on answer generation~\cite{HerzigNMPE20,YinNYR20,ZhangHFCDP24,LiuCGZLCL22}.
Only a few studies explicitly address table retrieval. Single-table retrieval methods such as Solo~\cite{WangF23} and Pneuma~\cite{balaka2025pneuma} aim to identify a single table that can answer a question. For multi-table QA, MTR-TableLLaMa~\cite{wummqa} decomposes complex questions into sub-questions and retrieves tables iteratively before answer generation. In contrast, \ours{} is designed as a unified table retrieval framework that directly supports this user intent. By jointly modeling relationships between questions and tables, \ours{} enables holistic retrieval without explicit question decomposition. Experimental results show that \ours{} consistently outperforms MTR-TableLLaMa and Pneuma in retrieving relevant tables, demonstrating its effectiveness in supporting table QA through robust data discovery.

\smallskip
\noindent Beyond the closely related work discussed above, we discuss additional related topics, including table-based fact verification~\cite{GuF0NZ022,HerzigNMPE20,ZhangWWCZW20,ZhouLZW22,ChaiGZF021,OpenTFV22}, NL-to-SQL~\cite{DLiJ14,SahaFSMMO16,GaoWLSQDZ24,LiHQYLLWQGHZ0LC23,TaiCZ0023} and heterogeneous graph neural networks~\cite{WuPCLZY21,WangJSWYCY19,fu2020magnn,ZhangSHSC19,wang2021self,HuDWS20}, in Appendix~\ref{ap:add_work}.

\section{Conclusion}
\label{sec:conclusion}
We introduced \ours{}, a unified data discovery framework for data lakes that supports both natural language statements and tables as queries, while generalizing across diverse user intents in discovery scenarios, such as open-ended enrichment and task-driven inference. 
At its core, \ours{} employs a cross-modal representation module that learns unified embeddings for NL statements and tables via a graph learning model under limited supervision, by exploiting naturally occurring contextual relationships in data lakes to substantially reduce reliance on large collections of labeled (query, table) pairs. 
These learned embeddings are materialized and indexed, enabling an adaptive retrieval module to efficiently support both binary and ranked retrieval across diverse discovery scenarios. 
Extensive experiments on seven real-world datasets demonstrate that \ours{} enables effective and efficient data discovery for both NL- and table-based queries, consistently outperforming strong baselines on downstream tasks such as table question answering and table-based fact verification, while remaining robust under limited supervision and sparse graph structures.


\clearpage


\bibliographystyle{ACM-Reference-Format}
\bibliography{totalab}

\clearpage

\appendix

\section{An Example for Joint Optimization Objective}
\label{ap:exam_loss}

This example shows how the joint learning objective in \ours{} combines contrastive learning loss and edge reconstruction loss.

\begin{example}
As illustrated in Fig.~\ref{fig:loss}, based on the heterogeneous graph with node embeddings obtained using meta-path-based neighbor aggregation, we sample three observed edges, $s_1$–$t_2$, $s_1$–$t_5$, and $t_3$–$t_4$, as masked positive edges, and three unconnected pairs, $s_1$–$t_1$, $s_1$–$t_3$, and $t_1$–$t_3$, as negative edges. 
These edges are decoded using an MLP model, and the MLP outputs are used to compute the edge reconstruction loss $\mathcal{L}_{\mathrm{ER}}$.

To compute the contrastive learning loss, we consider both meta-path-based neighbor and $r$-neighbor aggregation views. 
For target nodes $s_1$ and $t_1$, we treat embeddings under $r$-neighbor aggregation as positive samples, and use the embeddings of another node under $r$-neighbor aggregation -- $s_3$, $s_4$ for $s_1$, and $t_3$, $t_4$ for $t_1$ -- as negative samples to compute the loss $\hat{\mathcal{L}}_{\mathrm{CL}}^{\mathcal{P}}$. 
Similarly, we can compute the loss $\hat{\mathcal{L}}_{\mathrm{CL}}^{r}$. 
The total contrastive loss $\mathcal{L}_{\mathrm{CL}}$ is obtained by combining $\hat{\mathcal{L}}_{\mathrm{CL}}^{\mathcal{P}}$ with $ \hat{\mathcal{L}}_{\mathrm{CL}}^{r}$.

Finally, we combine the contrastive learning loss $\mathcal{L}_{\mathrm{CL}}$ and the edge reconstruction loss $\mathcal{L}_{\mathrm{ER}}$ to form the overall objective $\mathcal{L}$.
\end{example}

\begin{figure}[h]
\setlength{\abovecaptionskip}{0cm}
\setlength{\belowcaptionskip}{0cm}
\centering
\includegraphics[width=85mm]{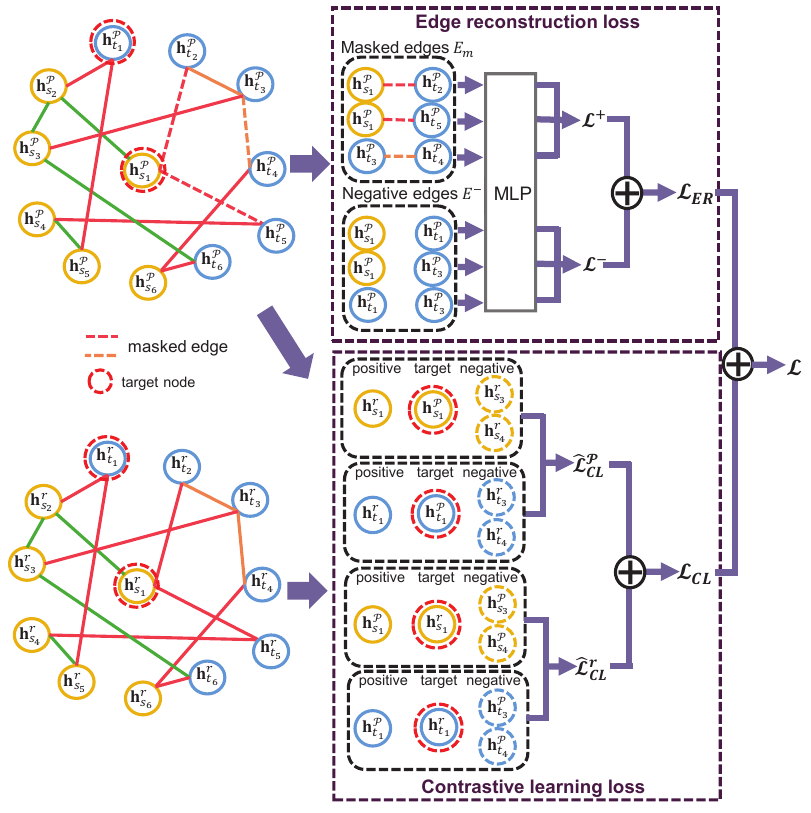}
\caption{\label{fig:loss}{The illustration of joint optimization objective.} 
}
\end{figure}

\section{Additional Related Work}
\label{ap:add_work}

\noindent\textbf{Table-based fact verification}. Table-based fact verification is an important user intent in data discovery, where the goal is to verify natural language claims using structured tables as evidence. However, most prior work assumes relevant tables are readily available and focuses solely on the verification step~\cite{GuF0NZ022,HerzigNMPE20,ZhangWWCZW20,ZhouLZW22,ChaiGZF021}. To bridge this gap, OpenTFV~\cite{OpenTFV22} introduces a two-stage pipeline: (i) keyword matching to retrieve the top-$k$ candidate tables, and (ii) semantic re-ranking and final verification using models such as LPA~\cite{ChenWCZWLZW20}. In our experiments, we adopt this pipeline but replace LPA with PASTA~\cite{GuF0NZ022}, a state-of-the-art verification model. Results show that \ours{} substantially improves retrieval quality, leading to better end-to-end verification performance.

\smallskip
\noindent\textbf{Natural Language to SQL (NL-to-SQL)}. NL-to-SQL approaches
aim to translate natural language queries into executable SQL
statements. Early methods relied on syntactic parsers and handcrafted rules~\cite{DLiJ14,SahaFSMMO16}, followed by pre-trained language models
such as BERT~\cite{DevlinCLT19} and T5~\cite{RaffelSRLNMZLL20}. More recent works~\cite{GaoWLSQDZ24,LiHQYLLWQGHZ0LC23,TaiCZ0023} leverage large language models (LLMs) with advanced prompting
techniques for SQL generation. A key challenge in these methods
is schema linking – accurately aligning natural language phrases
with the corresponding table and column names. This is typically
addressed under the assumption that the relevant tables are provided with complete schemas. Moreover, these methods primarily
focus on query answering -- generating effective SQL queries based
on given tables. In contrast, our work tackles the problem of data
discovery: identifying relevant tables from a large corpus, often
with incomplete or missing schema information. This setting is
more realistic in large-scale data lakes and serves as a crucial prerequisite for downstream query answering.

\smallskip
\noindent\textbf{Heterogeneous Graph Neural Networks}. Graph neural networks (GNNs) have attracted considerable attention in recent years. 
Most GNNs are homogeneous, as summarized in the comprehensive survey by~\cite{WuPCLZY21}. 
Recently, increasing efforts have been made to extend GNNs to heterogeneous data. 
For example, HAN~\cite{WangJSWYCY19} introduces both node-level and semantic-level attention mechanisms to learn node embeddings.
MAGNN~\cite{fu2020magnn} incorporates intra-metapath and inter-metapath aggregation to enhance representation learning.
HetGNN~\cite{ZhangSHSC19} jointly considers heterogeneous content encoding, type-based neighbor aggregation, and the integration of different node types to enhance representation learning.
HeCo~\cite{wang2021self} employs a cross-view contrastive learning framework to capture both local and high-order structural information. 
HGT~\cite{HuDWS20} is designed to handle web-scale heterogeneous graphs effectively. 
Inspired by these recent advances, \ours{} introduces a specialized heterogeneous graph neural network tailored to effectively learn embeddings for both NL statements and tables.

\end{document}